\theoremstyle{plain}
\newtheorem{theorem}{Theorem}
\newtheorem{lemma}{Lemma}
\theoremstyle{remark}
\theoremstyle{definition}
\newtheorem{defn}{Definition}[section]
\newcounter{example}
\newenvironment{example}[1][]{\refstepcounter{example}\par\medskip
   \noindent \textbf{Example~\theexample. #1}\rmfamily}{\medskip}
\newcommand{\tran}[1]{\stackrel{#1}{\rightarrow}}
\newcommand{\revtran}[1]{\stackrel{#1}{\rightsquigarrow}}
\title{Reversing Parallel Programs with Blocks and Procedures}
\author{
James Hoey \qquad\qquad Irek Ulidowski
\institute{Department of Informatics\\
University of Leicester, UK}
\email{\quad jbh11@leicester.ac.uk \quad\qquad iu3@leicester.ac.uk}
\and
Shoji Yuen \institute{Graduate School of Informatics \\ Nagoya University, Japan}
\email{yuen@i.nagoya-u.ac.jp}
}
\begin{document}
\maketitle

\begin{abstract}
We show how to reverse a while language extended with blocks, local variables, procedures and the interleaving parallel composition. Annotation is defined along with a set of operational semantics capable of storing necessary reversal information, and identifiers are introduced to capture the interleaving order of an execution. Inversion is defined with a set of operational semantics that use saved information to undo an execution. We prove that annotation does not alter the behaviour of the original program, and that inversion correctly restores the initial program state. 
\end{abstract}

\section{Introduction} \label{sec-intro}
Reverse execution of programs is the ability to take the final program state of an execution, undo all effects that were the result of that execution, and restore the exact initial program state. This is a desirable capability as it has applications to many active research areas, including debugging \cite{EG2014} and Parallel Discrete Event Simulation \cite{RJ1990}. When combined with parallelism, reverse execution removes issues relating to non-deterministic execution orders, allowing specific execution interleavings to be analysed easily.

In our previous work \cite{JH2017}, we described a state-saving approach to reversible execution of an imperative \emph{while language}. Similarly to RCC \cite{KP2014}, we generated two versions of a program, the \emph{augmented forwards version} to save all necessary \emph{reversal information} alongside its execution, and the \emph{inverted version} that uses this saved data to undo all changes. We proved that augmentation did not alter the behaviour of our program, and that inversion correctly restores the initial program state. We then experimented with reversing a tiny language containing assignments and interleaving parallel composition.%, where program executions can be \emph{interleaved}. We introduced \emph{annotation}, deferred state-saving to the operational semantics and assigned \emph{identifiers} to capture the non-deterministic execution order.

In this paper, we extend the while language with blocks, local variables and procedures, as well as the parallel composition operator. Local variables mean we must recognise scope, for example different versions of a shared name used in parallel. Issues arise with the traditional approach, specifically with recursion, and calls to the same procedure executing in parallel. Annotation and inversion are defined, allowing this extended language to be executed forwards with state-saving, as well as in reverse using this saved information. The process of assigning identifiers to statements as we execute them is described, focusing on \emph{backtracking order}, where statements are undone in the inverted order of the forwards execution. We mention future work on causal-consistent reversibility \cite{VD2004,IL2014,IP2007} in the conclusion. 

Consider the example shown in Figure \ref{intro-ex-1}, where \texttt{w1.0} and $\lambda$ can be ignored. This is a simple model of a restaurant with two entrances. One where a single person is continually allowed to enter, increasing the number of current single guests (\texttt{c}), until the total capacity (\texttt{c} + \texttt{r}) reaches the maximum (\texttt{m}). The other allows a reserved group of two to enter, increasing the number of reserved guests (\texttt{r}). Let the initial state be that \texttt{m} = \texttt{4}, \texttt{c} = \texttt{0} and \texttt{r} = \texttt{0}. The execution begins with two full iterations of the while loop, allowing two people to enter meaning \texttt{c} = \texttt{2}. Next, the condition of the loop is evaluated, but the body is not yet executed. Interleaving now occurs, setting \texttt{r} to \texttt{2}. Finally, the body of the loop is now executed, before the condition evaluates to false and the loop finishes. The final state is \texttt{m} = \texttt{4}, \texttt{c} = \texttt{3} and \texttt{r} = \texttt{2}, which should be invalid as the total number of guests (\texttt{c} + \texttt{r}) $>$ \texttt{m}. This executed version of the annotated program is shown in Figure~\ref{intro-ex-2}, where each statement now has a stack populated with identifiers in the order in which the statement occurred (starting at 1). One solution to finding this bug is reverse execution. The inverted version we generate (which, coincidentally is identical to Figure \ref{intro-ex-2}) allows step-by-step reversal, using identifiers to remove non-determinism. Backtracking through this execution removes the difficulties of cyclic debugging where different interleavings can occur. Using this, we can see that we wrongly commit to allowing the third single person to enter, meaning the condition gave true when expected to give false. Examining this further, we can see that the reserved guests are not considered until they have arrived, meaning this condition is not aware that the maximum capacity is actually \texttt{m} - 2. This is an example of a \emph{race} between the writing and reading of \texttt{r}. This is fixed using the condition \texttt{((m-c-2-1) >= 0)}.

Our main contributions are 
\begin{enumerate}
\item  The definition of three sets of operational semantics for our language, namely for \emph{traditional} forwards only execution, annotated forwards execution, and for reverse execution.   
\item Annotation allows all necessary state-saving, and the use of identifiers to record the interleaving order of execution. Inversion then uses the saved information to reverse via backtracking order.
\item Results showing that annotation does not alter the behaviour of the original program, and that inversion correctly restores to the initial program state.
\end{enumerate}  
We also have a prototype simulator under development. This will be capable of implementing both the forwards and reverse execution, and used for both performance evaluation and validation of our results.

%EXAMPLE FROM INTRODUCTION
\begin{figure}[t]
  \begin{minipage}[b]{0.49\linewidth}
   \centering
   {\small \begin{lstlisting}[xleftmargin=2.0ex,mathescape=true]
 $\texttt{par \{}$
  $\texttt{while w1.0 ((m - c - r - 1) >= 0) do}$
   $\texttt{c = c + 1}$ $\lambda$;
  $\texttt{end}$ $\lambda$; $\texttt{\}}$     
 $\texttt{\{}$ $\texttt{r}$ = $\texttt{2}$ $\lambda$; $\texttt{\}}$
\end{lstlisting} }
    \caption{Original program}
	\label{intro-ex-1}
  \end{minipage}
  \begin{minipage}[b]{0.49\linewidth}
    \centering
    {\small \begin{lstlisting}[xleftmargin=2.0ex,mathescape=true]
 $\texttt{par \{}$
  $\texttt{while w1.0 ((m - c - r - 1) >= 0) do}$
   $\texttt{c = c + 1 (}$$\lambda \texttt{, [2,4,7])}$;
  $\texttt{end (}$$\lambda \texttt{, [1,3,5,8,9])}$; $\texttt{\}}$
 $\texttt{\{}$ $\texttt{r}$ = $\texttt{2 (}$$\lambda \texttt{, [6])}$; $\texttt{\}}$
	\end{lstlisting} }
    \caption{Executed annotated program}
    \label{intro-ex-2}
  \end{minipage}
\end{figure}
%END OF EXAMPLE FROM INTRODUCTION

\subsection{Related Work}
Program inversion has been the focus of many works for many years, including Jefferson \cite{DJ1985}, Gries \cite{DG1981} and Gl\"{u}ck and Kawabe \cite{RG2004,RG2005}. The Reverse C Compiler (RCC) by Perumalla \cite{KP2014} describes a state-saving approach to reversibility of C programs.  The Backstroke framework \cite{GV2011} and extensions of it by Schordan et al \cite{MS2015} describe an approach to reversing C++ in the setting of Parallel Discrete Event Simulation \cite{RJ1990}.  The reversible programming language Janus,  worked on in \cite{TY2008A,TY2007}, adds additional information into the source code, making all programs reversible. More recent work on reversible imperative programs by Gl\"{u}ck and Yokoyama \cite{RG2016,RG2017} introduce the languages R-WHILE and R-CORE. Reversibility of algebraic process calculi is the focus of work by Phillips and Ulidowski \cite{IP2007,IP2012}, where the notion of identifiers was introduced. There has been work on reversible object oriented programming languages, including that of Schultz \cite{US2016} and the language ROOPL \cite{TH2017}. The application of reverse computation to debugging of message passing concurrent programs is considered by Giachino et al \cite{EG2014}.

\section{Programming Language, Environments and Scope} \label{sec-les}
Let \textbf{P} be the set of all programs and \textbf{S} be the set of all statements. Each program \texttt{P} will be either a statement \texttt{S}, the sequential composition of programs \texttt{P;Q} or the parallel composition of programs \texttt{P}~\texttt{par}~\texttt{Q} (sometimes written as \texttt{par} \texttt{\{P\}\{Q\}}). Each statement will either be a skip operation (empty statement), an assignment, a conditional, a loop, a block, a variable or procedure declaration, a variable or procedure removal or a call. A block consists of the declaration of both local variables \texttt{DV} and procedures \texttt{DP}, a body that uses these, and then the removal of local procedures \texttt{RP} and variables \texttt{RV}. Procedures do not have arguments, static scope is assumed and recursion is permitted. The syntax of this language is shown below, including arithmetic and boolean expressions. Note that the constructs \texttt{runC} and \texttt{runB} are reserved words that appear in our syntax, but not in original programs, and will be explained in Section \ref{sec-trad}. These allow static operational semantics to be defined, needed to aid state-saving and in our results.

Many statements contain a path \texttt{pa} that is explained in Section \ref{ssec-scope}. Each conditional, loop, block, procedure and procedure call statement has a unique identifier named \texttt{In}, \texttt{Wn}, \texttt{Bn}, \texttt{Pn} and \texttt{Cn} respectively, each of which is an element of the sets \textbf{In}, \textbf{Wn}, \textbf{Bn}, \textbf{Pn} and \textbf{Cn} respectively. The set union of these gives us the set of \emph{construct identifiers} \textbf{CI}. Note a procedure has a name from the set \textbf{n} (appears in code, and is potentially duplicated), as well as a unique identifier \texttt{Pn}. %We sometimes write \texttt{P par Q} as \texttt{par} \texttt{\{P\}\{Q\}}.%Parallel composition is sometimes written inline during examples, such as \texttt{par} \texttt{\{P\}\{P\}}. %, and that paths \texttt{pa} and construct identifiers such as \texttt{Bn} will be explained later. %Further note that many statements contain a path \texttt{pa}, which will be explained later. Each conditional, loop, procedure declaration, call and block have a unique \emph{construct identifier} (\texttt{In},\texttt{Wn},\texttt{Pn},\texttt{Cn},\texttt{Bn} respectively) that will also be explained later.  
\begin{align*}
\texttt{P} &::= \texttt{$\varepsilon$} ~|~ \texttt{S} ~|~ \texttt{P; P} ~|~ \texttt{P par P} \\
\texttt{S} &::= \texttt{skip} ~|~  \texttt{X = E pa} ~|~ \texttt{if In B then P else Q end pa} ~| \\ &\phantom{\texttt{::=}}  \texttt{while Wn B do P end pa} ~|~ \texttt{begin Bn DV DP P RP RV end}  ~| \\ &\phantom{\texttt{::=}}\texttt{call Cn n pa} ~|~   \texttt{runC Cn P end} ~|~   \texttt{runB P end} 
\end{align*} 
\vspace{-0.8cm}
\begin{align*}
\texttt{DV} &::= \texttt{$\varepsilon$} ~|~ \texttt{var X = v pa; DV}   &
\texttt{DP} &::= \texttt{$\varepsilon$} ~|~ \texttt{proc Pn n is P pa; DP} \\
\texttt{RV} &::= \texttt{$\varepsilon$} ~|~ \texttt{remove X = v pa; RV}  &
\texttt{RP} &::= \texttt{$\varepsilon$} ~|~ \texttt{remove Pn n is P pa; RP} \\
\texttt{E}   &::=   \texttt{Var} ~|~ \texttt{n} ~|~  \texttt{(E)} ~|~ \texttt{E Op E}  &\texttt{B}  &::= \texttt{T} ~|~ \texttt{F} ~|~ \lnot\texttt{B}~|~ \texttt{(B)} ~|~ \texttt{E == E} ~|~ \texttt{E > E} ~|~  \texttt{B $\land$ B} 
\end{align*} % 

\subsection{Environments} \label{ssec-env}
We complete our setting with the definition of several environments. Let \textbf{V} be the set of all program variables, \texttt{Loc} be the set of all memory locations, and \texttt{Num} be the set of integers.

As in \cite{HH2010}, we first have a \emph{variable environment} $\gamma$, responsible for mapping a variable name and the block to which it is local ($\lambda$ in the case of global variables) to its bound memory location. This is defined as $\gamma: (\textbf{V} \times \textbf{Bn}) \mapsto \textbf{Loc}$. The notation $\gamma$[\texttt{(X,Bn)} $\Rightarrow$ \texttt{l}] indicates that the pair \texttt{(X,Bn)} maps to the memory location \texttt{l}, while $\gamma$[\texttt{(X,Bn)}] represents an update to $\gamma$ with the mapping for the pair \texttt{(X,Bn)} removed. 

We have a \emph{data store} $\sigma$, responsible for mapping each memory location to the value it currently holds, defined as $\sigma: (\textbf{Loc} \mapsto \textbf{Num})$. The notion $\sigma$[\texttt{l} $\mapsto$ \texttt{v}] indicates location \texttt{l} now holds the value \texttt{v}.

The \emph{procedure environment} $\mu$ is responsible for mapping either a procedure or call identifier to both the actual procedure name (used in code) and (a copy of) the body. This environment is defined as $\mu: (\textbf{Pn} \cup \textbf{Cn}) \mapsto (\textbf{n} \times \textbf{P})$. The notation $\mu$[\texttt{Pn} $\Rightarrow$ \texttt{(n,P)}] represents that \texttt{Pn} maps to the pair \texttt{(n,P)},  $\mu$[\emph{refC(}\texttt{Pn,P}\emph{)}] represents the updating of the mapping for \texttt{Pn} with changes retrieved from \texttt{P}, and $\mu$[\texttt{Pn}] indicates the removal of the mapping for \texttt{Pn} (in each case, \texttt{Pn} could also be \texttt{Cn}).  

Finally, the \emph{while environment} $\beta$ is responsible for mapping a unique loop identifier to a copy of that loop. This serves the purpose of storing both the original condition and program, allowing our semantics to be static. This is defined as $\beta: \textbf{Wn} \mapsto \textbf{P}$. The notation $\beta$[\texttt{Wn} $\Rightarrow$ \texttt{P}] indicates that \texttt{Wn} now maps to the program \texttt{P}, $\beta$[\emph{refW(}\texttt{Wn,P}\emph{)}] represents the updating of the mapping for \texttt{Wn} with changes retrieved from \texttt{P}, and the notation $\beta$[\texttt{Wn}] shows the removal of the mapping for \texttt{Wn}.

We now combine all environments, and use the notation $\square$ to represent the set \{$\sigma$,$\gamma$,$\mu$,$\beta$\}. Each environment has a prime version, indicating a potential and arbitrary change.

\subsection{Scope} \label{ssec-scope}
Local variables can share their name with a global variable, as well as local variables declared in different blocks. The traditional method of handling this, as described for example in \cite{HH2010}, is to implement a stack of environments, storing a copy for each scope. This is not suitable when we use parallel composition as there can be several active scopes in an execution at once. We therefore implement a single environment that will store all versions of variables. Variables will either be global, or local to a specific block. Using $\lambda$ to represent the empty block name (a global variable), associating a variable with the identifier of the block in which it is declared is sufficient. Now all versions of a variable name are stored distinctly. 

We must be able to access the correct version of a given variable name. Under the traditional approach, each environment will have only one mapping of any variable, something we do not have. We must be able to determine the block identifier in which the variable was defined, which will not necessarily be the current block. We achieve this by assigning a \emph{path} to each statement. Each path will be the sequence of the block identifiers \texttt{Bn}, for blocks in which this statement resides, separated using `*'. Consider statement \texttt{F = S (b2*b1,A)} (from Figure \ref{big-ex-re-ann}) that has a path \texttt{b2*b1}, meaning it occurs within a block \texttt{b2}, which is nested within \texttt{b1}. Therefore we have the function \emph{evalV()}, that takes a variable name and a path, traverses the sequence of block names until the first is found that has a local variable of this name, and uses this block name (or $\lambda$ if no match) to return the desired memory location. A similar reasoning, and function \emph{evalP()}, exists to evaluate potentially shared procedure names, returning the correct unique procedure identifier.

One complication is \emph{code reuse}, where the same program code is executed multiple times, with the two cases being procedure and loop bodies. Consider two calls to the same procedure in parallel. Both may create a local variable of a block, where on each side the block has the same name, meaning both will incorrectly use the same version of the local variable. A similar case exists for recursive calls, as shown in Example 1. Therefore, we \emph{rename} any reused code prior to its execution. This must make all constructs unique, a task achieved using the unique call identifier \texttt{Cn}. All construct names are modified to now start with the unique call identifier, with paths also updated to reflect changes made to block identifiers. Consider again the statement for \texttt{F} from Figure \ref{big-ex-re-ann}. When block \texttt{b2} is renamed to \texttt{c1:c2:b2}, the path becomes \texttt{c1:c2:b2*b1} (Figure \ref{big-ex-ann-call}). This removes the issue described above, as each version of the variable will have a different block identifier. This process must modify the call identifier of any recursive call statement similarly. Therefore we have the function \emph{reP()} that implements this renaming of procedure bodies, and as renaming occurs in reverse, we have \emph{IreP()}. 

The reuse of loop bodies is different, as it is not possible for the same code (with same construct names) to be executed in parallel when not in a procedure body (handled above). However, in order to keep all identifiers unique, and to aid future extensions to causal-consistent reversibility, we \emph{version} each construct name, incrementing it by 1 for each loop iteration. For example, a conditional statement \texttt{i1.0} will become \texttt{i1.1}. These versions are maintained via the function \texttt{nextID()}, used in $reL()$ that performs this renaming, and the function \texttt{previousID()}, used in the reverse renaming function $IreL()$.

\section{Forwards Only Operational Semantics} \label{sec-trad}
We now define the traditional, forwards only semantics of our language. We give a set of transition rules for each construct of our language and a set of environments. These rules specify how \emph{configurations} (namely, pairs containing a program and a set of environments) compute by performing single transition steps. The transition rules define our small step transition relation \emph{configuration} $\hookrightarrow$ \emph{configuration}. The transitive closure $\hookrightarrow^*$ represents executions of programs. The forwards only semantics do not perform any state-saving, thus making them irreversible. Transitions labelled with \texttt{a} or \texttt{b} are steps of arithmetic or boolean expression evaluation respectively, while $\hookrightarrow^*_{\texttt{a}}$ and $\hookrightarrow^*_{\texttt{b}}$ are the transitive closure of each. Semantics of both are omitted as they are as expected, see \cite{HH2010}. By abuse of notation, we now use $\square$ to represent all environments of the set \{$\sigma$,$\gamma$,$\mu$,$\beta$\} that are not modified via the specific rule. For example, if a rule only changes the procedure environment $\mu$, then $\square$ will be the set \{$\sigma$,$\gamma$,$\beta$\}. The semantics listed below are static, necessary for later sections including state-saving and our results.

\vspace{.1cm}
\noindent \textbf{Sequential and Parallel Composition} Programs can be of the form \texttt{S;P} or \texttt{P par Q}. As such, programs either execute sequentially, or allow each side of a parallel statement to interleave their execution.

\vspace{-.5cm}
{\small \begin{align*}
&\text{[S1]} \quad \frac{(\texttt{S} \mid \square) \hookrightarrow (\texttt{S$'$} \mid \square')}{(\texttt{S; P} \mid \square) \hookrightarrow (\texttt{S$'$; P} \mid \square')}  &\quad  &\text{[S2]} \quad \frac{}{(\texttt{skip; P} \mid \square) \hookrightarrow (\texttt{P} \mid \square)} \\[6pt]
&\text{[P1]} \quad \frac{(\texttt{P} \mid \square) \hookrightarrow (\texttt{P$'$} \mid \square')}{(\texttt{P par Q} \mid \square) \hookrightarrow (\texttt{P$'$ par Q} \mid \square')}
&\quad &\text{[P2]} \quad \frac{(\texttt{Q} \mid \square) \hookrightarrow (\texttt{Q$'$} \mid \square')}{(\texttt{P par Q} \mid \square) \hookrightarrow (\texttt{P par Q$'$} \mid \square')} \\[6pt]
&\text{[P3]} \quad \frac{}{(\texttt{P par skip} \mid \square) \hookrightarrow (\texttt{P} \mid \square)} &\quad
&\text{[P4]} \quad \frac{}{(\texttt{skip par Q} \mid \square) \hookrightarrow (\texttt{Q} \mid \square)}
\end{align*} }%

\vspace{.1cm}
\noindent \textbf{Assignment} All assignments are considered destructive, with the overwritten value being lost. A single atomic rule both evaluates the expression and assigns the new value to the appropriate memory location. Arithmetic expressions do not contain side effects, meaning evaluation of these does not change the environments, as shown in rule \text{[D1]}. Similarly, this is also the case for boolean expressions, as shown in rule \text{[I1]} and there after. 

\vspace{-.5cm}
{\small \begin{align*}
&\text{[D1]} \quad \frac{(\texttt{e pa} \mid \sigma,\gamma,\square) \hookrightarrow^*_{\texttt{a}} (\texttt{v} \mid \sigma,\gamma,\square) \quad \texttt{$evalV(\gamma$,pa,X$)$ = l}}{(\texttt{X = e pa} \mid \sigma,\gamma,\square) \hookrightarrow (\texttt{skip} \mid \sigma[\texttt{l $\mapsto$ v}],\gamma,\square)}
\end{align*} }%

\vspace{.1cm}
\noindent \textbf{Conditional} Condition evaluation is atomic via \text{[I1]}, before the appropriate branch is executed completely to skip (potentially interleaved).

\vspace{-.5cm}
{\small \begin{align*}
&\text{[I1]} \quad \frac{(\texttt{b pa} \mid \square) \hookrightarrow^*_{\texttt{b}} (\texttt{V} \mid \square)}{(\texttt{if In b then P else Q end pa} \mid \square) \hookrightarrow (\texttt{if In V then P else Q end pa} \mid \square)} \\[6pt]
&\text{[I2]} \quad \frac{(\texttt{P} \mid \square) \hookrightarrow (\texttt{P$'$} \mid \square')}{(\texttt{if In T then P else Q end pa} \mid \square) \hookrightarrow (\texttt{if In T then P$'$ else Q end pa} \mid \square')} \\[6pt]
&\text{[I3]} \quad \frac{(\texttt{Q} \mid \square) \hookrightarrow (\texttt{Q$'$} \mid \square')}{(\texttt{if In F then P else Q end pa} \mid \square) \hookrightarrow (\texttt{if In F then P else Q$'$ end pa} \mid \square')} \\[6pt]
&\text{[I4]} \quad \frac{}{(\texttt{if In T then skip else Q end pa} \mid \square) \hookrightarrow (\texttt{skip} \mid \square)}  \\[6pt]
&\text{[I5]} \quad \frac{}{(\texttt{if In F then P else skip end pa} \mid \square) \hookrightarrow (\texttt{skip} \mid \square)} 
\end{align*} }% 

\vspace{.1cm}
\noindent \textbf{While Loop} Evaluation of the condition is always atomic. \text{[W1]} handles the first iteration of a loop, where no mapping for \texttt{Wn} is present in $\beta$. The mapping \texttt{Wn} $\Rightarrow$ \texttt{R} is inserted, and the condition is evaluated. \text{[W2]} handles any other iteration, evaluating the condition and updating the mapping within $\beta$ to \texttt{Wn}~$\Rightarrow$~\texttt{R$'$}. Both rules rename the body, making constructs unique. \text{[W3]} executes the body. \text{[W4]} continues the loop using the program \texttt{P}, retrieved from $\beta$ for \texttt{Wn}, until the condition is false, when \text{[W5]} will conclude the statement. The premise \texttt{$\beta($Wn$)$ = R} indicates an arbitrary mapping exists. This is necessary as the rule \text{[W5]} requires the removal of some mapping via $\beta[\texttt{Wn}]$. Note that these semantics (and the semantics defined in Sections \ref{sec-ann} and \ref{sec-inv}) are correct for all while loops with conditions that require evaluation. In the case of \texttt{b} initially being \texttt{T} or \texttt{F}, there may be ambiguity in our rules.

\vspace{-.5cm}
{\small \begin{align*}
&\text{[W1]} \quad \frac{\texttt{$\beta($Wn$)$ = }\emph{und} \quad (\texttt{b pa} \mid \beta,\square) \hookrightarrow^*_{\texttt{b}} (\texttt{V} \mid \beta,\square) }{(\texttt{while Wn b do P end pa} \mid \beta,\square) \hookrightarrow (\texttt{while Wn V do $reL($P$)$ end pa} \mid \beta[\texttt{Wn} \Rightarrow \texttt{R}],\square)} \\[2pt] & \phantom{\text{[W1]} \quad } \text{where } \texttt{R} = \texttt{while Wn b do $reL($P$)$ end pa} \\[6pt]
&\text{[W2]} \quad \frac{\texttt{$\beta($Wn$)$ = while Wn b do P end pa} \quad (\texttt{b pa} \mid \beta,\square) \hookrightarrow^*_{\texttt{b}} (\texttt{V} \mid \beta,\square) }{(\texttt{while Wn b do P end pa} \mid \beta,\square) \hookrightarrow (\texttt{while Wn V do $reL($P$)$ end pa} \mid \beta[\texttt{Wn} \Rightarrow \texttt{R$'$}],\square)} \\[2pt] & \phantom{\text{[W2]} \quad } \text{where }  \texttt{R$'$} = \texttt{while Wn b do $reL($P$)$ end pa}  \\[6pt]
&\text{[W3]} \quad \frac{(\texttt{R} \mid \square) \hookrightarrow (\texttt{R$'$} \mid \square')}{(\texttt{while Wn T do R end pa} \mid \square) \hookrightarrow (\texttt{while Wn T do R$'$ end pa} \mid \square')}  \\[6pt]
&\text{[W4]} \quad \frac{\texttt{$\beta($Wn$)$ = P} }{(\texttt{while Wn T do skip end pa} \mid \beta,\square) \hookrightarrow (\texttt{P} \mid \beta,\square)}  \\[6pt]
&\text{[W5]} \quad \frac{\texttt{$\beta($Wn$)$ = R}}{(\texttt{while Wn F do P end pa} \mid \beta,\square) \hookrightarrow (\texttt{skip} \mid \beta[\texttt{Wn}],\square)}  
\end{align*} }%

\vspace{.1cm}
\noindent \textbf{Block} Blocks begin with \text{[B1]} that creates the \texttt{runB} construct. This then executes the block body via \text{[B2]}, beginning with the declaration of local variables and procedures. The program then executes using these local definitions, before all such information is removed. Finally \text{[B3]} concludes the statement.

\vspace{-.5cm}
{\small \begin{align*}
&\text{[B1]} \quad \frac{}{(\texttt{begin Bn P end} \mid \square) \hookrightarrow (\texttt{runB P end} \mid \square)} \quad \text{where } \texttt{P } \text{=} \texttt{ DV;DP;Q;RP;RV} \\[6pt] 
&\text{[B2]} \quad \frac{(\texttt{P} \mid \square) \hookrightarrow (\texttt{P$'$} \mid \square')}{(\texttt{runB P end} \mid \square) \hookrightarrow (\texttt{runB P$'$ end} \mid \square')}  \quad \text{[B3]} \quad \frac{}{(\texttt{runB skip end} \mid \square) \hookrightarrow (\texttt{skip} \mid \square)} 
\end{align*} }%

\vspace{.1cm}
\noindent \textbf{Variable and Procedure Declaration} A variable declaration \text{[L1]} associates the given variable name and current block name \texttt{Bn} (first element of the sequence \texttt{pa}, written \texttt{Bn*pa$'$}) to the next available memory location \texttt{l} (via \texttt{nextLoc()}) in $\gamma$, while also mapping this location to the value \texttt{v} in $\sigma$ (via the notation \texttt{l} $\mapsto$ \texttt{v}). A procedure declaration \text{[L2]} inserts the basis mapping between the unique procedure identifier \texttt{Pn} and a pair containing both the procedure name and the procedure body \texttt{(n,P)}. A call statement uses this mapping to create a renamed version. 

\vspace{-.5cm}
{\small \begin{align*}
&\text{[L1]} \quad \frac{\texttt{nextLoc() = l} \quad \texttt{pa} = \texttt{Bn*pa$'$}}{(\texttt{var X = v pa} \mid \sigma,\gamma,\square) \hookrightarrow (\texttt{skip} \mid \sigma[\texttt{l} \mapsto \texttt{v}],\gamma[(\texttt{X},\texttt{Bn}) \Rightarrow \texttt{l}],\square)} \\[6pt]
&\text{[L2]} \quad \frac{}{(\texttt{proc Pn n is P pa} \mid \mu,\square) \hookrightarrow (\texttt{skip} \mid \mu[\texttt{Pn} \Rightarrow \texttt{(n,P)}],\square)}
\end{align*} }%

\vspace{.1cm}
\noindent \textbf{Procedure Call} \text{[G1]} evaluates the procedure name to \texttt{Pn}, and retrieves the basis entry \texttt{(n,P)} from $\mu$. The renamed version of \texttt{P}, written \texttt{P$'$}, is then inserted into $\mu$ via the mapping \texttt{Cn} $\Rightarrow$ \texttt{(n,P$'$)}, and the \texttt{runC} construct is formed. \text{[G2]} executes the body of the call statement, before \text{[G3]} concludes the statement by removing the mapping for \texttt{Cn} within $\mu$, written $\mu$[\texttt{Cn}]. 

\vspace{-.3cm}
{\small \begin{align*}
&\text{[G1]} \quad \frac{\texttt{$evalP($n,pa$)$ = Pn} \quad \texttt{$\mu($Pn$)$ = (n,P)} \quad \texttt{$reP($P,Cn$)$ = P$'$}}{(\texttt{call Cn n pa} \mid \mu,\square) \hookrightarrow (\texttt{runC Cn P$'$ end} \mid \mu[\texttt{Cn} \Rightarrow \texttt{(n,P$'$)}],\square)} \\[6pt]
&\text{[G2]} \quad \frac{(\texttt{P} \mid \square) \hookrightarrow (\texttt{P$'$} \mid \square')}{(\texttt{runC Cn P end} \mid \square) \hookrightarrow (\texttt{runC Cn P$'$ end} \mid \square')} \\[6pt]
&\text{[G3]} \quad \frac{}{(\texttt{runC Cn skip end} \mid \mu,\square) \hookrightarrow (\texttt{skip} \mid \mu[\texttt{Cn}],\square)}  
\end{align*} }%

\vspace{.1cm}
\noindent \textbf{Variable and Procedure Removal} A local variable \texttt{X} is local to the inner most block \texttt{Bn} of its path, written as \texttt{Bn*pa$'$}. Removal of a variable \text{[H1]} removes the mapping of \texttt{(X,Bn)} from $\gamma$, written $\gamma$[\texttt{(X,Bn)}]. The location associated with this mapping is set to \texttt{0} (\texttt{l} $\mapsto$ \texttt{0}) and marked free for future use. A procedure removal \text{[H2]} removes the  mapping for the given procedure identifier \texttt{Pn}, written as $\mu$[\texttt{Pn}].

\vspace{-.3cm}
{\small \begin{align*}
&\text{[H1]} \quad \frac{\texttt{pa} = \texttt{Bn*pa$'$} \quad \texttt{$\gamma($X,Bn$)$ = l}}{(\texttt{remove X = v pa} \mid \sigma,\gamma,\square) \hookrightarrow (\texttt{skip} \mid \sigma[\texttt{l} \mapsto \texttt{0}],\gamma[(\texttt{X},\texttt{Bn})],\square)} \\[6pt]
&\text{[H2]} \quad \frac{}{(\texttt{remove Pn n is P pa} \mid \mu,\square) \hookrightarrow (\texttt{skip} \mid \mu[\texttt{Pn}],\square)}
\end{align*} }%

\section{Forwards Semantics of Annotated Programs} \label{sec-ann}
Annotation is the process of generating our annotated version of a program. This will make small changes to the syntax of our program, adding the capability of storing necessary reversal information. Before defining this in detail, we must have an environment for storing this information, keeping it separate from the program state. To do this, we use an updated version of our auxiliary store $\delta$ \cite{JH2017}. There is a stack for each program variable name, storing any overwritten values the variable holds throughout the execution, whether the variable is global, local or both. Using one stack for all versions of a variable helps us to devise a technique to handle \emph{races} on that variable. There is a single stack \texttt{B} for \emph{all} conditional statements. After completion of the conditional, a pair containing an identifier and a boolean value indicating which branch was executed will be saved. Identifiers allow us to resolve any races, meaning all pairs for all conditionals can be pushed to a single stack. There is a single stack \texttt{W} for \emph{all} while loops. As explained and illustrated in our previous work \cite{JH2017}, this stack will contain pairs of an identifier and a boolean value. These pairs produce a sequence of boolean values necessary for inverse execution. Stack \texttt{WI} stores all \emph{annotation information} (order of identifiers) of a while loop, before it's removed from $\beta$. Stack \texttt{Pr} performs a similar task for procedure bodies. Let \textbf{V} be the set of all variables, $\mathbb{S}(\textbf{V})$ be a set of stacks, one for each element of $\textbf{V}$, $\mathbb{B}$ be the set of boolean values, $\mathbb{C}$ be the annotation information and \textbf{K} be the set of identifiers. Then $ \delta : (\mathbb{S}(\textbf{V}) \mapsto (\textbf{K} \times \textbf{Num})) \cup (\mathbb{S}(\texttt{B}) \mapsto (\textbf{K} \times \mathbb{B})) \cup (\mathbb{S}(\texttt{W}) \mapsto (\textbf{K} \times \mathbb{B})) \cup (\mathbb{S}(\texttt{WI}) \mapsto (\textbf{K} \times \mathbb{C})) \cup (\mathbb{S}(\texttt{Pr}) \mapsto (\textbf{K} \times \mathbb{C})) $. The notation $\delta$[\texttt{el} $\rightharpoonup$ \texttt{st}] pushes \texttt{el} to the stack \texttt{st}, while $\delta$[\texttt{st/st$'$}] pops the top element of stack \texttt{st}, leaving the remaining stack \texttt{st$'$}.

We now define annotation. Each statement, excluding blocks and parallel, receives a stack \texttt{A} for identifiers. The association of a statement to its stack persists throughout the execution. Each time a statement executes, the (global and atomic) function \texttt{next()} retrieves the next available identifier, which is pushed to that statements stack (and $\delta$ if necessary). The functions $ann()$ and $a()$ are defined below. 

\vspace{-.2cm}
{\small $$ ann(\texttt{$\varepsilon$}) = \texttt{$\varepsilon$} \quad ann(\texttt{S;P}) = a(\texttt{S}); ann(\texttt{P}) \quad ann(\texttt{P par Q}) = ann(\texttt{P}) \texttt{ par } ann(\texttt{Q}) $$ }
\vspace{-.9cm}
{\small \begin{align*}
a(\texttt{skip}) &= \texttt{skip I} \\
a(\texttt{X = e pa}) &= \texttt{X = e (pa,A)} \\
a(\texttt{if In b then P else Q end pa}) &= \texttt{if In b then $ann($P$)$ else $ann($Q$)$ end (pa,A)} \\
a(\texttt{while Wn b do P end pa}) &= \texttt{while Wn b do $ann($P$)$ end (pa,A)} \\
a(\texttt{begin Bn DV DP P RP RV end}) &= \texttt{begin Bn $ann($DV$)$ $ann($DP$)$ $ann($P$)$ $ann($RP$)$ $ann($RV$)$ end} \\
a(\texttt{var X = v pa}) &= \texttt{var X = v (pa,A)} \\
a(\texttt{proc Pn n is P pa}) &= \texttt{proc Pn n is $ann($P$)$ (pa,A)} \\
a(\texttt{call Cn n pa}) &= \texttt{call Cn n (pa,A)} \\
a(\texttt{remove X = v pa}) &= \texttt{remove X = v (pa,A)} \\
a(\texttt{remove Pn n is P pa}) &= \texttt{remove Pn n is $ann($P$)$ (pa,A)} 
\end{align*} }%
We use \texttt{I} to represent either nothing, a path, an identifier stack, or a pair consisting of both a path and an identifier stack. After application of these functions, the resulting annotated version is of the following, modified syntax (with expressions omitted as they match Section \ref{sec-les}). We note that $a(\texttt{runB P end})$ is $\texttt{runB $ann($P$)$ end}$, and $a(\texttt{runC Cn P end})$ is $\texttt{runC Cn $ann($P$)$ end A}$. Execution of the annotated version produces the \emph{executed annotated version}, an identical copy but with populated identifier stacks. 
\begin{align*}
\texttt{AP} &::= \texttt{$\varepsilon$} ~|~ \texttt{AS} ~|~ \texttt{AP; AP} ~|~ \texttt{AP par AP} \\
\texttt{AS} &::= \texttt{skip I} ~|~  \texttt{X = E (pa,A)} ~|~ \texttt{if In B then AP else AQ end (pa,A)} ~| \\ &\phantom{\texttt{::=}}  \texttt{while Wn B do AP end (pa,A)} ~|~ \texttt{begin Bn ADV ADP AP ARP ARV end}  ~| \\ &\phantom{\texttt{::=}}\texttt{call Cn n (pa,A)} ~|~   \texttt{runC Cn AP end A} ~|~   \texttt{runB AP end} 
\end{align*} 
\vspace{-0.9cm}
\begin{align*}
\texttt{ADV} &::= \texttt{$\varepsilon$} ~|~ \texttt{var X = v (pa,A); ADV} &
\texttt{ADP} &::= \texttt{$\varepsilon$} ~|~ \texttt{proc Pn n is AP (pa,A); ADP} \\
\texttt{ARV} &::= \texttt{$\varepsilon$} ~|~ \texttt{remove X = v (pa,A); ARV} &
\texttt{ARP} &::= \texttt{$\varepsilon$} ~|~ \texttt{remove Pn n is AP (pa,A); ARP} 
\end{align*}%

As all statements within an annotated version will be of this syntax, this must be reflected in our environments, specifically $\mu$ and $\beta$ that store programs. As a result, we now use $\square$ to represent the set of annotated environments, unless explicitly stated otherwise. 

\begin{figure}[t]
  \begin{minipage}[b]{0.49\linewidth}
   \centering
   {\small \begin{lstlisting}[xleftmargin=2.0ex,mathescape=true]
$\texttt{begin b1}$  
 $\texttt{proc p1 fib is}$ 
  $\texttt{begin b2}$
   $\texttt{var T}$ = $\texttt{0 b2}$;
   $\texttt{if i1}$ $\texttt{(N - 2 > 0)}$ $\texttt{then}$
    $\texttt{T}$ = $\texttt{F + S b2}$;
    $\texttt{F}$ = $\texttt{S b2}$;
    $\texttt{S}$ = $\texttt{T b2}$;
    $\texttt{N}$ = $\texttt{N - 1 b2}$;
    $\texttt{call c2 fib b2}$;
   $\texttt{end  b2}$
   $\texttt{remove T}$ = $\texttt{0 b2}$;
  $\texttt{end}$
 $\texttt{end  b1}$
 $\texttt{call c1 fib b1}$;
 $\texttt{remove p1 fib is P  b1}$;  
$\texttt{end}$
\end{lstlisting} }
    \caption{Original program}
	\label{big-ex-org}
  \end{minipage}
  \hspace{0.01cm}
  \begin{minipage}[b]{0.49\linewidth}
    \centering
    {\small \begin{lstlisting}[xleftmargin=2.0ex,mathescape=true]
$\texttt{begin b1}$    
 $\texttt{proc p1 fib is}$ 
  $\texttt{begin b2}$
   $\texttt{var T}$ = $\texttt{0 (b2*b1,A)}$;
   $\texttt{if i1}$ $\texttt{(N - 2 > 0)}$ $\texttt{then}$
    $\texttt{T}$ = $\texttt{F + S (b2*b1,A)}$;
    $\texttt{F}$ = $\texttt{S (b2*b1,A)}$;
    $\texttt{S}$ = $\texttt{T (b2*b1,A)}$;
    $\texttt{N}$ = $\texttt{N - 1 (b2*b1,A)}$;
    $\texttt{call c2 fib (b2*b1,A)}$;
   $\texttt{end  (b2*b1,A)}$
   $\texttt{remove T}$ = $\texttt{0 (b2,A)}$;
  $\texttt{end}$
 $\texttt{end  (b1,A)}$   
 $\texttt{call c1 fib (b1,A)}$; 
 $\texttt{remove p1 fib is AP (b1,A)}$;  
$\texttt{end}$
	\end{lstlisting} }
    \caption{Renamed and annotated program}
    \label{big-ex-re-ann}
  \end{minipage}
\end{figure}

\begin{example} We now consider the example implementation of a Fibonacci sequence using our programming language above, shown in Figure \ref{big-ex-org}. Note that version numbers are omitted due to the absence of while loops, and all paths are initially just the most direct block name. Let \texttt{P} denote the procedure body shown in the declaration statement, namely lines 3-13 of Figure \ref{big-ex-org}. The procedure removal statement on line 16 then uses this. Assume global variables \texttt{F=3}, \texttt{S=4} and \texttt{N=4}. This program calculates the \texttt{N}th element of the Fibonacci sequence beginning with the first and second elements \texttt{F} and \texttt{S}. After execution, the \texttt{N}th element will be the value held by \texttt{S}. Before we execute this program forwards, it must first be both annotated and renamed, shown in Figure \ref{big-ex-re-ann}. All paths have now been updated to include all block identifiers necessary for execution, and all appropriate statements now have a stack \texttt{A} for storing identifiers. Line 7 of Figure \ref{big-ex-org} has become line 7 of Figure \ref{big-ex-re-ann}, which now has the path \texttt{b2*b1}, meaning this statement appears directly within \texttt{b2}, and indirectly within \texttt{b1}.  \end{example}\label{ex-1}

Prior to defining the operational semantics of this, we must first introduce three functions. The first, $getAI()$, returns the order and application of identifiers to a given program. This allows us to extract the annotation information that would otherwise be lost when a while or procedure environment mapping is removed. The second, $\emph{refW()}$, reflects a given annotation change to the copy of the program mapped to the given while identifier. The third, $\emph{refC()}$, is identical, but will reflect a change made to a procedure body using a given call identifier. Recall functions $reL()$ and $reP()$ from Section \ref{ssec-scope}.

We are now ready to give the operational semantics. The transition rules are those in Section \ref{sec-trad}, but with $\hookrightarrow$ replaced with $\rightarrow$, and with all state-saving performed. We introduce \emph{m-rules}, the name given to each transition rule that assigns an identifier. All other rules that do not use identifiers are now named \emph{non m-rules}. Transitions $\tran{m}$ are also called \emph{identifier transitions}.

\vspace{.1cm}
\noindent \textbf{Sequential and Parallel Composition} These are identical to Section \ref{sec-trad}, but with the annotated syntax.

\vspace{-.5cm}
{\small \begin{align*}
&\text{[S1a]} \quad \frac{(\texttt{AS} \mid \square) \tran{\circ} (\texttt{AS$'$} \mid \square')}{(\texttt{AS; AP} \mid \square) \tran{\circ} (\texttt{AS$'$; AP} \mid \square')} & \quad &\text{[S2a]} \quad \frac{}{(\texttt{skip I; AP} \mid \square) \rightarrow (\texttt{AP} \mid \square)} \\[6pt]
&\text{[P1a]} \quad \frac{(\texttt{AP} \mid \square) \tran{\circ} (\texttt{AP$'$} \mid \square')}{(\texttt{AP par AQ} \mid \square) \tran{\circ} (\texttt{AP$'$ par AQ} \mid \square')} &\quad &\text{[P2a]} \quad \frac{(\texttt{AQ} \mid \square) \tran{\circ} (\texttt{AQ$'$} \mid \square')}{(\texttt{AP par AQ} \mid \square) \tran{\circ} (\texttt{AP par AQ$'$} \mid \square')} \\[6pt]
&\text{[P3a]} \quad \frac{}{(\texttt{AP par skip I} \mid \square) \rightarrow (\texttt{AP} \mid \square)} &\quad
&\text{[P4a]} \quad \frac{}{(\texttt{skip I par AQ} \mid \square) \rightarrow (\texttt{AQ} \mid \square)} 
\end{align*} }% 

\vspace{.1cm}
\noindent \textbf{Assignment} This is an m-rule, saving the old value and the next available identifier \texttt{m}, retrieved via the function \texttt{next()}, onto this variables stack on $\delta$.

\vspace{-.5cm}
{\small \begin{align*}
&\text{[D1a]} \quad \frac{(\texttt{e pa} \mid \delta,\sigma,\gamma,\square) \hookrightarrow^*_{\texttt{a}} (\texttt{v} \mid \delta,\sigma,\gamma,\square) \quad \texttt{m = next()} \quad \texttt{$evalV(\gamma$,pa,X$)$ = l}}{(\texttt{X = e (pa,A)} \mid \delta,\sigma,\gamma,\square) \tran{m} (\texttt{skip m:A} \mid \delta[\texttt{(m,$\sigma($l$)$) $\rightharpoonup$ X}],\sigma[\texttt{l} \mapsto \texttt{v}],\gamma,\square)}
\end{align*} }%

\vspace{.1cm}
\noindent \textbf{Conditional} All rules follow as in Section \ref{sec-trad}, but with annotated programs, and \text{[I4a]} and \text{[I5a]} both being m-rules. These save the next available identifier \texttt{m} (via \texttt{next()}) and a boolean value indicating which branch was executed (after execution of the branch) onto stack \texttt{B} on $\delta$.

\vspace{-.5cm}
{\small \begin{align*}
&\text{[I1a]} \quad \frac{(\texttt{b pa} \mid \square) \hookrightarrow^*_{\texttt{b}} (\texttt{V} \mid \square)}{(\texttt{if In b then AP else AQ end (pa,A)} \mid \square) \rightarrow (\texttt{if In V then AP else AQ end (pa,A)} \mid \square)} \\[6pt]
&\text{[I2a]} \quad \frac{(\texttt{AP} \mid \square) \tran{\circ} (\texttt{AP$'$} \mid \square')}{(\texttt{if In T then AP else AQ end (pa,A)} \mid \square) \tran{\circ} (\texttt{if In T then AP$'$ else AQ end (pa,A)} \mid \square')} \\[6pt]
&\text{[I3a]} \quad \frac{(\texttt{AQ} \mid \square) \tran{\circ} (\texttt{AQ$'$} \mid \square')}{(\texttt{if In F then AP else AQ end (pa,A)} \mid \square) \tran{\circ} (\texttt{if In F then AP else AQ$'$ end (pa,A)} \mid \square')} \\[6pt]
&\text{[I4a]} \quad \frac{\texttt{m = next()}}{(\texttt{if In T then skip I else AQ end (pa,A)} \mid \delta,\square) \tran{m} (\texttt{skip m:A} \mid \delta[\texttt{(m,T) $\rightharpoonup$ B}],\square)} \\[6pt]
&\text{[I5a]} \quad \frac{\texttt{m = next()}}{(\texttt{if In F then AP else skip I end (pa,A)} \mid \delta,\square) \tran{m} (\texttt{skip m:A} \mid \delta[\texttt{(m,F) $\rightharpoonup$ B}],\square)} 
\end{align*} }% 

\vspace{.1cm}
\noindent \textbf{While Loop} The first two rules are m-rules, saving the next available identifier \texttt{m} (via \texttt{next()}) and an element of the boolean sequence onto stack \texttt{W} on $\delta$. \text{[W1a]} handles the first iteration of a loop, creating a mapping on $\beta$ as in Section \ref{sec-trad} (but with annotated programs), and saving the first element \texttt{F} of the boolean sequence (see \cite{JH2017}). \text{[W2a]} handles any other iteration, updating the current mapping as before and saving the next element \texttt{T} of the boolean sequence (see \cite{JH2017}). Both rules rename the loop body. The body executes via \text{[W3a]}, now reflecting all annotation changes of \texttt{AR$'$} into the stored copy, written using $\beta'$[\emph{refW(}\texttt{Wn,AR$'$}\emph{)}]. Finally, a loop either continues via \text{[W4a]}, or finishes via the m-rule \text{[W5a]}. This final rule stores the next available identifier \texttt{m} and all annotation information ($getAI()$) onto stack \texttt{WI}, before removing the mapping, written $\beta$[\texttt{Wn}]. %{\color{red} These semantics are correct for all terminating while loops. In the case where the original condition is \texttt{T}, there will be ambiguity between the rules \texttt{[W1a]} - \texttt{[W4a]}.}

\vspace{-.4cm}
{\small \begin{align*}
&\text{[W1a]} \quad \frac{\texttt{m = next()} \quad \texttt{$\beta($Wn$)$ = }\emph{und} \quad (\texttt{b pa} \mid \beta,\square) \hookrightarrow^*_{\texttt{b}} (\texttt{V} \mid \beta,\square) }{(\texttt{S} \mid \delta,\beta,\square) \tran{m} (\texttt{while Wn V do $reL($AP$)$ end (pa,m:A)} \mid \delta[\texttt{(m,F) $\rightharpoonup$ W}],\beta[\texttt{Wn} \Rightarrow \texttt{AR}],\square)} \\[5pt] & \phantom{\text{[W1a]} \quad } \text{where } \texttt{S} = \texttt{while Wn b do AP end (pa,A)} \text{ and } \texttt{AR} = \texttt{while Wn b do $reL($AP$)$ end (pa,m:A)}  \\[6pt]
&\text{[W2a]} \quad \frac{\texttt{m = next()} \quad \texttt{$\beta($Wn$)$ = while Wn b do AQ end (pa,A)} \quad (\texttt{b pa} \mid \beta,\square) \hookrightarrow^*_{\texttt{b}} (\texttt{V} \mid \beta,\square) }{(\texttt{S} \mid \delta,\beta,\square) \tran{m} (\texttt{while Wn V do $reL($AQ$)$ end (pa,m:A)} \mid \delta[\texttt{(m,T) $\rightharpoonup$ W}],\beta[\texttt{Wn} \Rightarrow \texttt{AR$'$}],\square)} \\[5pt] & \phantom{\text{[W2a]} \quad } \text{where } \texttt{S} = \texttt{while Wn b do AP end (pa,A)} \text{ and } \texttt{AR$'$} = \texttt{while Wn b do $reL($AQ$)$ end (pa,m:A)}  \\[6pt]
&\text{[W3a]} \quad \frac{(\texttt{AR} \mid \beta,\square) \tran{\circ} (\texttt{AR$'$} \mid \beta',\square')}{(\texttt{while Wn T do AR end (pa,A)} \mid \beta,\square) \tran{\circ} (\texttt{while Wn T do AR$'$ end (pa,A)} \mid \beta'',\square')} \\[4pt] & \phantom{\text{[W3a]} \quad } \text{where } \texttt{$\beta''$} = \texttt{$\beta'$}[\emph{refW(}\texttt{Wn,AR$'$}\emph{)}] \\[6pt]
&\text{[W4a]} \quad \frac{\texttt{$\beta($Wn$)$ = AP} }{(\texttt{while Wn T do skip I end (pa,A)} \mid \beta,\square) \rightarrow (\texttt{AP} \mid \beta,\square)}  \\[6pt]
&\text{[W5a]} \quad \frac{\texttt{m = next()} \quad \texttt{$\beta($Wn$)$ = AR}}{(\texttt{while Wn F do AP end (pa,A)} \mid \delta,\beta,\square) \tran{m} (\texttt{skip m:A} \mid \delta[\texttt{(m,$getAI(\beta($Wn$))$) $\rightharpoonup$ WI}],\beta[\texttt{Wn}],\square)}  
\end{align*} }%

\vspace{.1cm}
\noindent \textbf{Block} These are identical to before, but using the annotated syntax.

\vspace{-.5cm}
{\small \begin{align*}
&\text{[B1a]} \quad \frac{}{(\texttt{begin Bn AP end} \mid \square) \rightarrow (\texttt{runB AP end} \mid \square)} \quad \text{where } \texttt{AP } \text{=} \texttt{ ADV;ADP;AQ;ARP;ARV} \\[6pt] 
&\text{[B2a]} \quad \frac{(\texttt{AP} \mid \square) \tran{\circ} (\texttt{AP$'$} \mid \square')}{(\texttt{runB AP end} \mid \square) \tran{\circ} (\texttt{runB AP$'$ end} \mid \square')}  \quad \text{[B3a]} \quad \frac{}{(\texttt{runB skip I end} \mid \square) \rightarrow (\texttt{skip} \mid \square)}  
\end{align*} }%

\vspace{.1cm}
\noindent \textbf{Variable and Procedure Declaration} Variable declarations \text{[L1a]} are as before, but are now m-rules without state-saving. Procedure declarations \text{[L2a]} are also as before, but is also an m-rule without state-saving, and all programs are now annotated.

\vspace{-.5cm}
{\small \begin{align*}
&\text{[L1a]} \quad \frac{\texttt{m = next()} \quad \texttt{nextLoc() = l} \quad \texttt{pa} = \texttt{Bn*pa$'$} }{(\texttt{var X = v (pa,A)} \mid \sigma,\gamma,\square) \tran{m} (\texttt{skip m:A} \mid \sigma[\texttt{l} \mapsto \texttt{v}],\gamma[(\texttt{X},\texttt{Bn}) \Rightarrow \texttt{l}],\square)} \\[6pt]
&\text{[L2a]} \quad \frac{\texttt{m = next()}}{(\texttt{proc Pn n is AP (pa,A)} \mid \mu,\square) \tran{m} (\texttt{skip m:A} \mid \mu[\texttt{Pn} \Rightarrow \texttt{(n,AP)}],\square)}
\end{align*} }%

\vspace{.1cm}
\noindent \textbf{Procedure Call} \text{[G1a]} inserts a renamed copy of the basis entry onto $\mu$, exactly as in Section \ref{sec-trad}, but with an annotated program. \text{[G2a]} uses the \texttt{runC} construct to execute this renamed copy, but now reflects any annotation changes to the stored copy, written $\mu'$[\emph{refC(}\texttt{Cn},\texttt{AP$'$}\emph{)}]. \text{[G3a]} is now an m-rule, saving all annotation changes from the copy to the stack \texttt{Pr}, alongside the next available identifier \texttt{m} (via \texttt{next()}). The renamed copy is removed from $\mu$, written as $\mu$[\texttt{Cn}]. 

\vspace{-.5cm}
{\small \begin{align*}
&\text{[G1a]} \quad \frac{\texttt{$evalP($n,pa$)$ = Pn} \quad \texttt{$\mu($Pn$)$ = (n,AP)} \quad \texttt{$reP($AP,Cn$)$ = AP$'$}}{(\texttt{call Cn n (pa,A)} \mid \mu,\square) \rightarrow (\texttt{runC Cn AP$'$ end A} \mid \mu[\texttt{Cn} \Rightarrow \texttt{(n,AP$'$)}],\square)} \\[6pt]
&\text{[G2a]} \quad \frac{(\texttt{AP} \mid \mu,\square) \tran{\circ} (\texttt{AP$'$} \mid \mu',\square')}{(\texttt{runC Cn AP end A} \mid \mu,\square) \tran{\circ} (\texttt{runC Cn AP$'$ end A} \mid \mu'[\emph{refC(}\texttt{Cn},\texttt{AP$'$}\emph{)}],\square')} \\[6pt]
&\text{[G3a]} \quad \frac{\texttt{m = next()} \quad \texttt{$\mu$(Cn) = AP}}{(\texttt{runC Cn skip I end A} \mid \delta,\mu,\square) \tran{m} (\texttt{skip m:A} \mid \delta[\texttt{(m,$getAI(\mu($Cn$))$) $\rightharpoonup$ Pr}],\mu[\texttt{Cn}],\square)} 
\end{align*} }%

\vspace{.1cm}
\noindent \textbf{Variable and Procedure Removal} Variable removal \text{[H1a]} is now an m-rule, saving the final value of that variable and the next available identifier \texttt{m} (via \texttt{next()}) onto that variables stack on $\delta$. The mapping is removed as in Section \ref{sec-trad}. Procedure removal \text{[H2a]} is as before, but an m-rule without state-saving.

\vspace{-.5cm}
{\small \begin{align*}
&\text{[H1a]} \quad \frac{\texttt{m = next()} \quad \texttt{pa} = \texttt{Bn*pa$'$} \quad \texttt{$\gamma($X,Bn$)$ = l}}{(\texttt{remove X = v (pa,A)} \mid \delta,\sigma,\gamma,\square) \tran{m} (\texttt{skip m:A} \mid \delta[\texttt{(m,$\sigma($l$)$) $\rightharpoonup$ X}],\sigma[\texttt{l} \mapsto \texttt{0}],\gamma[(\texttt{X},\texttt{Bn})],\square)} \\[6pt]
&\text{[H2a]} \quad \frac{\texttt{m = next()}  \quad \texttt{$\mu$(Pn) = AQ}}{(\texttt{remove Pn n is AP (pa,A)} \mid \mu,\square) \tran{m} (\texttt{skip m:A} \mid \mu[\texttt{Pn}],\square)}
\end{align*} }%

\subsection{Results} \label{ssec-ann-res}
We first define equivalence between traditional and annotated environments. 

\begin{defn}{} \label{def-sig-gam}
Let $\sigma$ be a data store, $\sigma_1$ be an annotated data store, $\gamma$ be a variable environment and $\gamma_1$ be an annotated variable environment. We have $(\sigma,\gamma)$ is \emph{equivalent} to $(\sigma_1,\gamma_1)$, written $(\sigma,\gamma) \approx_S (\sigma_1,\gamma_1)$, if and only if \emph{dom(}$\gamma$\emph{)} = \emph{dom(}$\gamma_1$\emph{)} and $\sigma(\gamma(\textup{\texttt{X}},\textup{\texttt{Bn}}))$ = $\sigma_1(\gamma_1(\textup{\texttt{X}},\textup{\texttt{Bn}}))$ for all \texttt{X} $\in$ \emph{dom(}$\gamma$\emph{)} and block names \texttt{Bn}.
\end{defn}

\begin{defn}{} \label{def-proc}
Let $\mu$ be a procedure environment, and $\mu_1$ be an annotated procedure environment. We have that $\mu$ is \emph{equivalent} to $\mu_1$, written $\mu \approx_P \mu_1$, if and only if \emph{dom(}$\mu$\emph{)} = \emph{dom(}$\mu_1$\emph{)}, $\mu(\textup{\texttt{Pn}})$ = (\textup{\texttt{n,P}}), $\mu_1(\textup{\texttt{Pn}})$~=~(\textup{\texttt{n,AP}}) and $removeAnn($\texttt{AP}$)$ = \texttt{P} for all \texttt{Pn} $\in$ \emph{dom(}$\mu$\emph{)}. (Note \texttt{Pn} could be \texttt{Cn} here).
\end{defn}

\begin{defn}{} \label{def-while}
Let $\beta$ be a while environment, and $\beta_1$ be an annotated while environment. We have that $\beta$ is \emph{equivalent} to $\beta_1$, written $\beta \approx_W \beta_1$, if and only if \emph{dom(}$\beta$\emph{)} = \emph{dom(}$\beta_1$\emph{)}, $\beta(\textup{\texttt{Wn}})$ = \textup{\texttt{P}}, $\beta_1(\textup{\texttt{Wn}})$ = \textup{\texttt{AP}} and $removeAnn($\texttt{AP}$)$ = \texttt{P} for all \texttt{Wn} $\in$ \emph{dom(}$\beta$\emph{)}.
\end{defn}

\begin{defn}{} \label{def-aux}
Let $\delta$ be an auxiliary store, and $\delta_1$ be an annotated auxiliary store. Firstly, we define the equivalence of stacks \texttt{St} and \texttt{St$'$}, written as \texttt{St} $\approx_{ST}$ \texttt{St$'$}, as true if both stacks have matching elements.  We have that $\delta$ is \emph{equivalent} to $\delta_1$, written $\delta \approx_A \delta_1$, if and only if for each stack \texttt{St} $\in$ \texttt{$dom(\delta)$}, we have $\delta($\texttt{St}$) \approx_{ST} \delta_1(\texttt{ST})$.
\end{defn}

\begin{defn}{} \label{def-eq-en}
Let $\square$ represent the set of environments \{$\sigma$,$\gamma$,$\mu$,$\beta$\} and $\square_1$ represent the set of annotated environments \{$\sigma_1$,$\gamma_1$,$\mu_1$,$\beta_1$\}. We have that $\square$ is \emph{equivalent} to $\square_1$, written $\square \approx \square_1$, if and only if $(\sigma,\gamma)~\approx_S~(\sigma_1,\gamma_1)$, $\mu \approx_P \mu_1$ and $\beta \approx_W \beta_1$.
\end{defn}

We now present our results. Theorem \ref{res-ann-1} states that identifiers are used in ascending order. Theorem~\ref{res-ann-2} below states if an original program terminates, the annotated version will also, and annotation does not change the behaviour of the program w.r.t the stores $\square$, but does produce a populated auxiliary store $\delta'$. %We have a simulation tool currently being developed, with multiple tests supporting the validation of these results.

\begin{theorem} \label{res-ann-1}
Let \textup{\texttt{AP}} and \textup{\texttt{AQ}} be annotated programs, \textup{\texttt{$\square$}} be the set of environments and $\delta$ be an auxiliary store. If $ (\textup{\texttt{AP} $\mid$ $\square$, $\delta$}) \tran{\circ}^* (\textup{\texttt{AP$'$} $\mid$ $\square'$, $\delta'$}) \tran{\text{n}} (\textup{\texttt{AQ} $\mid$ $\square''$, $\delta''$}) \rightarrow^* (\textup{\texttt{AQ$'$} $\mid$ $\square'''$, $\delta'''$}) \tran{\text{m}} (\textup{\texttt{AQ$''$} $\mid$ $\square''''$, $\delta''''$}) $, and the computation $ (\textup{\texttt{AQ} $\mid$ $\square''$, $\delta''$}) \rightarrow^* (\textup{\texttt{AQ$'$} $\mid$ $\square'''$, $\delta'''$}) $  does not have any identifier transitions, then \text{m} = \text{n} + \textup{\texttt{1}}.
\end{theorem}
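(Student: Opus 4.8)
The plan is to reduce the statement to a single clean invariant about the global counter underlying \texttt{next()}, and then argue that every $\tran{m}$ transition consumes exactly one value of that counter while every non-identifier transition leaves it untouched. Concretely, I would introduce the notion of the \emph{identifier counter} $c$ associated with a configuration: the value that the next call to \texttt{next()} would return. Since \texttt{next()} is described as a global atomic function returning "the next available identifier", its semantics are that it returns $c$ and advances the counter to $c+1$; nothing else in the semantics reads or writes this counter. The real content of the theorem is then the two-part invariant: (i) if $(\texttt{AR} \mid \square, \delta)$ has counter value $c$ and $(\texttt{AR} \mid \square, \delta) \rightarrow (\texttt{AR}' \mid \square', \delta')$ is a \emph{non m-rule} transition, then $(\texttt{AR}' \mid \square', \delta')$ also has counter value $c$; and (ii) if $(\texttt{AR} \mid \square, \delta) \tran{k} (\texttt{AR}' \mid \square', \delta')$ is an identifier transition, then the counter value before is some $c$, we have $k = c$, and the counter value after is $c+1$.

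The first step is to establish (i) and (ii) by a case analysis over the transition rules of Section~\ref{sec-ann}. For (ii), the base cases are exactly the m-rules — \text{[D1a]}, \text{[I4a]}, \text{[I5a]}, \text{[W1a]}, \text{[W2a]}, \text{[W5a]}, \text{[L1a]}, \text{[L2a]}, \text{[G3a]}, \text{[H1a]}, \text{[H2a]} — each of which has a premise $\texttt{m = next()}$ and labels the transition with that same $\texttt{m}$; by the semantics of \texttt{next()} this gives $k = c$ and advances the counter. The inductive cases are the congruence rules \text{[S1a]}, \text{[P1a]}, \text{[P2a]}, \text{[I2a]}, \text{[I3a]}, \text{[W3a]}, \text{[B2a]}, \text{[G2a]}: each such rule fires an identifier transition of a sub-configuration iff its conclusion is an identifier transition, and the environment/counter effect is inherited unchanged from the premise, so the claim propagates. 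For (i), the non m-rules are \text{[S2a]}, \text{[P3a]}, \text{[P4a]}, \text{[I1a]}, \text{[W4a]}, \text{[B1a]}, \text{[B3a]}, \text{[G1a]} together with the congruence rules applied to non-identifier sub-transitions — none of these mentions \texttt{next()}, so the counter is preserved; the only subtlety is the expression-evaluation closures $\hookrightarrow^*_{\texttt{a}}$ and $\hookrightarrow^*_{\texttt{b}}$ appearing in premises, which are inherited from Section~\ref{sec-trad} and perform no state-saving, hence do not touch the counter.

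The second step is to chain these facts along the computation in the hypothesis. Write the counter values along the run as follows: let $c_0$ be the counter at $(\texttt{AQ} \mid \square'', \delta'')$. By repeated application of (i) to the identifier-free computation $(\texttt{AQ} \mid \square'', \delta'') \rightarrow^* (\texttt{AQ}' \mid \square''', \delta''')$, the counter at $(\texttt{AQ}' \mid \square''', \delta''')$ is still $c_0$. Applying (ii) to $(\texttt{AQ}' \mid \square''', \delta''') \tran{m} (\texttt{AQ}'' \mid \square'''', \delta'''')$ gives $\texttt{m} = c_0$. For the left end, applying (ii) to $(\texttt{AP}' \mid \square', \delta') \tran{n} (\texttt{AQ} \mid \square'', \delta'')$ gives $\texttt{n} = c'$ where $c'$ is the counter at $(\texttt{AP}' \mid \square', \delta')$, and the counter at $(\texttt{AQ} \mid \square'', \delta'')$ is therefore $c' + 1 = \texttt{n} + 1$. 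But that counter is exactly $c_0$, so $\texttt{m} = c_0 = \texttt{n} + 1$, as required. The arbitrary prefix $(\texttt{AP} \mid \square, \delta) \tran{\circ}^* (\texttt{AP}' \mid \square', \delta')$ plays no role and can be discarded.

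The main obstacle I anticipate is not the chaining — that is bookkeeping — but making the notion of "the counter value of a configuration" precise and legitimate within the paper's formalism, since the global state of \texttt{next()} is not an explicit component of a configuration $(\texttt{AP} \mid \square, \delta)$. The cleanest fix is to observe that the counter is recoverable from $\delta$ and the program syntax: every identifier ever issued has been pushed onto some stack in $\delta$ or recorded on some \texttt{A}-stack in the residual program, and identifiers are issued consecutively starting from $\texttt{1}$, so the counter equals one plus the maximum identifier occurring anywhere in $(\texttt{AP} \mid \square, \delta)$ (or $\texttt{1}$ if none occurs). Alternatively — and more in the spirit of an operational-semantics paper — I would simply carry the counter as an auxiliary global component that \texttt{next()} reads and increments, and note this is a conservative annotation of the rules already given. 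Either way, once this is pinned down, the case analysis in step one is entirely routine, the only mild care being the congruence rules \text{[W3a]} and \text{[G2a]} whose conclusions additionally invoke \emph{refW()} / \emph{refC()} on the environments — but those functions only rewrite stored copies of loop/procedure bodies and touch neither $\delta$ nor the counter, so (i) and (ii) survive unchanged.
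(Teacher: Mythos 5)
Your proposal is correct and follows essentially the same route as the paper's proof, which argues in a few informal lines that \texttt{next()} issues \texttt{n} and simultaneously advances to \texttt{n}$+$\texttt{1}, that non m-rules leave the counter untouched, and hence the next m-action obtains \texttt{m} $=$ \texttt{n} $+$ \texttt{1}. Your version simply makes this explicit as a counter invariant with a rule-by-rule case analysis, which is a more rigorous rendering of the same idea rather than a different approach.
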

\begin{proof}
The order of identifiers used during execution is maintained using \texttt{next()}. The program \texttt{AP} will begin with any number of steps. At some point, a transition occurs that will use the next available identifier \texttt{n}, while simultaneously incrementing \texttt{next()} by one to \texttt{m}. Any number of non m-rules can then apply, before the next m-action uses \texttt{next()} to get \texttt{m}. Hence, \text{m} = \text{n} + \texttt{1}. %The order of identifiers is maintained using the function $next()$. Beginning with the program \texttt{P}, we allow any number of steps to occur, which may or may not be m-actions. At some point, we will encounter an m-action using the identifier \texttt{n}, which could be any of the corresponding semantic rules. Such an application of one of these rules will use \texttt{n} as the current value of $next()$, while simultaneously incrementing it by one to \texttt{m}. Any number of non m-actions can then be executed, before we see the next m-action. In this step of the execution, the current value of $next()$ will be \texttt{m}. Therefore two successive m-actions have the relation \textup{\text{m}} = \textup{\text{n}} + \textup{\texttt{1}}.
\end{proof}

\begin{lemma} \label{claim-1}
Let \textup{\texttt{P}} be a program, \textup{\texttt{$\square$}} be the set \{$\sigma$,$\gamma$,$\mu$,$\beta$\} of all environments, \textup{\texttt{$\square_1$}} be the set \{$\sigma_1$,$\gamma_1$,$\mu_1$,$\beta_1$\} of annotated environments such that $\square \approx \square_1$ and $\delta$ be the auxiliary store. If $ (\textup{\texttt{P} $\mid$ $\square$,$\delta$}) \hookrightarrow (\textup{\texttt{P$'$} $\mid$ $\square'$,$\delta$}) $, for some $\square'$, then there exists an execution $ (\textup{\texttt{$ann($P$)$} $\mid$ $\square_1$,$\delta$}) \tran{\circ} (\textup{\texttt{P$''$} $\mid$ $\square_1'$,$\delta'$}) $, for some $\square_1'$, $\delta'$ and \textup{\texttt{P$''$}}~=~$ann($\textup{\texttt{P}}$')$ such that $\square'$ $\approx$ $\square_1'$.
\end{lemma}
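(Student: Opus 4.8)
The plan is to prove this by case analysis on the last rule applied in the traditional derivation $(\texttt{P} \mid \square,\delta) \hookrightarrow (\texttt{P}' \mid \square',\delta)$, showing in each case that the corresponding annotated rule (obtained by the syntactic translation $\hookrightarrow \leadsto \rightarrow$ described in Section~\ref{sec-ann}) is applicable to $(ann(\texttt{P}) \mid \square_1,\delta)$ and yields a configuration of the required form. The overall structure is a straightforward structural induction on the derivation tree: the base cases are the axiom rules (S2, P3, P4, D1, I1, I4, I5, L1, L2, H1, H2, B1, B3, G1, G3, W1, W2, W4, W5) and the inductive cases are the congruence rules (S1, P1, P2, I2, I3, W3, B2, G2), each of which reduces to the induction hypothesis applied to the sub-derivation.

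The key steps, in order, are as follows. First I would establish the shape of $ann(\texttt{P})$ from the definition of $ann()$ and $a()$: every construct of the traditional syntax maps to the analogous annotated construct with an added identifier stack \texttt{A} (or a pair \texttt{(pa,A)}), so the applicability of the annotated rule is immediate once the traditional rule's premises are met. Second, for the axiom cases I would check that the annotated rule's side conditions hold: the arithmetic/boolean evaluation premises transfer verbatim because expression evaluation does not touch $\delta$ and (by $\approx_S$, $\approx_P$, $\approx_W$) evaluates to the same value under $\square_1$ as under $\square$; the $evalV$, $evalP$, $nextLoc$ conditions transfer because $\approx_S$ gives $\textit{dom}(\gamma)=\textit{dom}(\gamma_1)$ with matching values, and $\approx_P$, $\approx_W$ give matching domains for $\mu,\beta$; the fresh identifier premise \texttt{m = next()} is always satisfiable since \texttt{next()} is a global function. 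Third, I would verify that after the annotated step the environment equivalence $\square' \approx \square_1'$ is preserved: each annotated rule performs exactly the same update to $\sigma,\gamma,\mu,\beta$ as its traditional counterpart (the only differences are the extra pushes to $\delta$ and the extra annotation bookkeeping \texttt{refW}, \texttt{refC}, $getAI$ which live in $\mu_1,\beta_1$ but are erased by $removeAnn$ in Definitions~\ref{def-proc} and~\ref{def-while}), so the equivalence clauses are re-established componentwise. Finally I would confirm \texttt{P}$''$ $= ann(\texttt{P}')$ by inspecting the right-hand side of each rule pair and noting that the residual program on the annotated side is syntactically the annotation of the residual on the traditional side (e.g.\ \texttt{skip} vs.\ \texttt{skip m:A}, both being $ann(\texttt{skip})$ up to the identifier decoration, which the paper's use of \texttt{I} absorbs).

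The two subtleties I expect to consume most of the work are the while-loop and procedure-call cases, because there the annotated rules manipulate $\beta_1$ and $\mu_1$ using \texttt{reL}/\texttt{reP} (renaming) and \texttt{refW}/\texttt{refC}/$getAI$ (reflecting annotation), whereas the traditional rules only do the renaming. I would handle these by appealing to the fact that $removeAnn$ commutes with $reL$ and $reP$ (renaming is orthogonal to annotation), so that the stored copy in $\beta_1$ (resp.\ $\mu_1$) remains the annotation of the stored copy in $\beta$ (resp.\ $\mu$), keeping $\approx_W$ (resp.\ $\approx_P$) intact; the extra pushes to the \texttt{W}, \texttt{WI}, \texttt{Pr} stacks only change $\delta$, which the lemma permits via the existentially quantified $\delta'$. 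The congruence cases for \text{[W3a]} and \text{[G2a]} additionally require noting that the \texttt{refW}/\texttt{refC} update applied to $\beta'$/$\mu'$ after the inductive sub-step does not disturb the equivalence, since the traditional \text{[W3]}/\text{[G2]} produce exactly the un-annotated body and $\approx_W$/$\approx_P$ is about $removeAnn$ of the stored program. Beyond these, the argument is routine rule-chasing, and I would present it compactly by treating one representative axiom case and one representative congruence case in detail and remarking that the rest are analogous.
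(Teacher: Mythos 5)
The paper states Lemma~\ref{claim-1} without proof (it is invoked as the key step in the proof of Theorem~\ref{res-ann-2}), so there is no in-paper argument to compare against; your rule-by-rule case analysis, with structural induction for the congruence rules, is exactly the argument the authors are implicitly relying on, and the points you isolate --- transfer of premises via $\approx_S$, $\approx_P$, $\approx_W$, componentwise preservation of the equivalence because the updates to $\sigma,\gamma,\mu,\beta$ are identical on both sides, and the fact that all extra bookkeeping lands in the existentially quantified $\delta'$ --- are the right ones. Two things deserve to be made explicit rather than asserted in passing. First, the commutation facts you appeal to in the while and call cases ($removeAnn$ commuting with $reL$ and $reP$, and likewise $ann$ with the renaming functions) are genuine auxiliary lemmas about syntax-directed functions and should be stated and proved separately, since the equality \texttt{P$''$} $= ann(\texttt{P}')$ in rules \text{[W1a]}/\text{[W2a]}/\text{[G1a]} stands or falls with them. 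Second, the literal equality \texttt{P$''$} $= ann(\texttt{P}')$ is not quite right as stated: $ann()$ attaches fresh identifier stacks, whereas the residual of an m-rule carries a populated stack \texttt{m:A} (and, in the while case, a condition already reduced to a value \texttt{V} rather than an expression). Your remark that the metavariable \texttt{I} ``absorbs'' this is the intended reading, but to make the lemma composable --- which is needed when it is applied repeatedly in the induction for Theorem~\ref{res-ann-2} --- you should strengthen the statement to quantify over annotated programs that agree with $ann(\texttt{P})$ up to the contents of identifier stacks (equivalently, programs \texttt{AP} with $removeAnn(\texttt{AP}) = \texttt{P}$), and prove that this weaker invariant is preserved by each step. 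With that adjustment the proof goes through as you describe.
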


\begin{theorem} \label{res-ann-2}
Let \textup{\texttt{P}} be an original program, \textup{\texttt{$\square$}} be the set \{$\sigma$,$\gamma$,$\mu$,$\beta$\} of all environments, \textup{\texttt{$\square_1$}} be the set \{$\sigma_1$,$\gamma_1$,$\mu_1$,$\beta_1$\} of annotated environments such that $\square \approx \square_1$ and $\delta$ be the auxiliary store. 
%\begin{enumerate}[ 1{)} ]
%\item {\color{blue}If $ (\textup{\texttt{P} $\mid$ $\square$,$\delta$}) \hookrightarrow^* (\textup{\texttt{skip} $\mid$ $\square'$,$\delta$}) $, for some $\square'$, then there exists an execution $ (\textup{\texttt{$ann($P$)$} $\mid$ $\square_1$,$\delta$}) \tran{\circ}^* (\textup{\texttt{skip I} $\mid$ $\square_1'$,$\delta'$}) $, for some \texttt{I}, $\square_1'$ and $\delta'$. (Termination)}
%\item 
If $ (\textup{\texttt{P} $\mid$ $\square$,$\delta$}) \hookrightarrow^* (\textup{\texttt{skip} $\mid$ $\square'$,$\delta$}) $, for some $\square'$, then there exists an execution $ (\textup{\texttt{$ann($P$)$} $\mid$ $\square_1$,$\delta$}) \tran{\circ}^* (\textup{\texttt{skip I} $\mid$ $\square_1'$,$\delta'$}) $, for some \textup{\texttt{I}}, $\square_1'$ and $\delta'$, such that $\square'$ $\approx$ $\square_1'$. 
%\end{enumerate}
\end{theorem}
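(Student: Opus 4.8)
The plan is to derive the theorem from the single-step Lemma~\ref{claim-1} by induction on the length of the traditional execution $(\texttt{P} \mid \square,\delta) \hookrightarrow^* (\texttt{skip} \mid \square',\delta)$. The base case is the zero-step execution, where $\texttt{P} = \texttt{skip}$; here $ann(\texttt{skip}) = \texttt{skip I}$ for the appropriate $\texttt{I}$, we take the empty annotated execution, and $\square \approx \square_1$ holds by hypothesis, so $\square' \approx \square_1'$ with $\square_1' = \square_1$ and $\delta' = \delta$. For the inductive step, suppose the execution has length $k+1$, so it factors as $(\texttt{P} \mid \square,\delta) \hookrightarrow (\texttt{P}_1 \mid \square_a,\delta) \hookrightarrow^* (\texttt{skip} \mid \square',\delta)$. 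Apply Lemma~\ref{claim-1} to the first step: since $\square \approx \square_1$, there is an annotated execution $(ann(\texttt{P}) \mid \square_1,\delta) \tran{\circ} (ann(\texttt{P}_1) \mid \square_{1a},\delta_a)$ with $\square_a \approx \square_{1a}$. Now apply the induction hypothesis to the remaining $k$-step traditional execution from $(\texttt{P}_1 \mid \square_a,\delta)$, using the fact that $\square_a \approx \square_{1a}$, to obtain an annotated execution $(ann(\texttt{P}_1) \mid \square_{1a},\delta_a) \tran{\circ}^* (\texttt{skip I} \mid \square_1',\delta')$ with $\square' \approx \square_1'$. Concatenating the two annotated executions yields the required $(ann(\texttt{P}) \mid \square_1,\delta) \tran{\circ}^* (\texttt{skip I} \mid \square_1',\delta')$ with $\square' \approx \square_1'$.

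One technical point to address carefully: Lemma~\ref{claim-1} as stated assumes the traditional step does not change $\delta$ (which is vacuous in the traditional semantics, since those rules never touch $\delta$), and it is stated for a single traditional step starting from a program $\texttt{P}$ whose annotation is $ann(\texttt{P})$. In the inductive step we need the intermediate program $\texttt{P}_1$ to itself be in the image of $ann()$, or more precisely we need the annotated configuration reached after the first annotated step to be $ann(\texttt{P}_1)$ paired with an environment equivalent to $\square_a$ — this is exactly the conclusion "$\texttt{P}'' = ann(\texttt{P}')$" of Lemma~\ref{claim-1}, so the induction goes through. I would also remark that the traditional semantics and the annotated (non-$\delta$) semantics are in lockstep rule-for-rule, so the intermediate programs arising in the traditional execution are precisely the $removeAnn(\cdot)$ images of the intermediate annotated programs; this is what makes the repeated application of Lemma~\ref{claim-1} coherent.

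The main obstacle is really in Lemma~\ref{claim-1} rather than in the theorem itself — the theorem is a routine induction once the lemma is available. Within the lemma, the delicate cases are the \emph{code-reuse} rules: \text{[W1a]}/\text{[W2a]}/\text{[W3a]} for loops and \text{[G1a]}/\text{[G2a]}/\text{[G3a]} for procedure calls, where $\beta$ and $\mu$ store copies of (annotated) program fragments and the operations $reL()$, $reP()$, $refW()$, $refC()$, $getAI()$ act on them. One must check that the annotated environment stays equivalent to the traditional one under these operations — i.e.\ that $removeAnn(reL(ann(\texttt{P}))) = reL_{\text{trad}}(\texttt{P})$ and similarly for $reP()$, and that $removeAnn$ is invariant under $refW()/refC()$ and under the annotation-only book-keeping — so that Definitions~\ref{def-proc} and~\ref{def-while} are preserved. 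Since this excerpt ends at the statement of the theorem, I would simply invoke Lemma~\ref{claim-1} and carry out the induction above, leaving the environment-equivalence bookkeeping for the renaming operations to the proof of the lemma.
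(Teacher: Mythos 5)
Your proposal is correct and follows essentially the same route as the paper: induction on the length of the traditional execution $(\texttt{P} \mid \square,\delta) \hookrightarrow^* (\texttt{skip} \mid \square',\delta)$, discharging each step via Lemma~\ref{claim-1} and concatenating the resulting annotated transitions. In fact you supply more detail than the paper's own (very terse) proof, and your observation that the real work lies in the code-reuse cases of Lemma~\ref{claim-1} is well placed.
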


\begin{proof}
The proof is by induction on the length of the sequence $(\texttt{P} \mid \square,\delta) \hookrightarrow^* (\texttt{skip} \mid \square',\delta')$. We consider \texttt{P} being either a sequential or parallel composition of programs. Both cases hold using Lemma \ref{claim-1}. %Since both of these cases are proven to hold (using Claim \ref{claim-1}), our result is valid. Due to space constraints, we omit the proof here. We note the implication holds in the opposite direction, but defer this proof to future work.} %We also note that the proof of the first part (termination) can be read from that of the second, meaning we omit it entirely.
\end{proof}
\noindent We note the implication holds in the opposite direction, but defer proof to future work.

\section{Reverse Semantics of Inverted Programs} \label{sec-inv}
Inversion is the process of generating the inverted version of a given program, produced from the executed annotated version, as the populated identifier stacks are necessary. As the inverted version is a program that executes forwards, inversion inverts the overall statement order. The functions that performs this, namely $inv()$ and the supplementary $i()$, are defined below. 

{\small $$ inv(\texttt{$\varepsilon$}) = \texttt{$\varepsilon$} \quad inv(\texttt{AS;AP}) = inv(\texttt{AP}); i(\texttt{AS}) \quad inv(\texttt{AP par AQ}) = inv(\texttt{AP}) \texttt{ par } inv(\texttt{AQ}) $$ }
\vspace{-.5cm}
{\small \begin{align*}
i(\texttt{skip I}) &= \texttt{skip I} \\
i(\texttt{X = e (pa,A)}) &= \texttt{X = e (pa,A)} \\
i(\texttt{if In b then AP else AQ end (pa,A)}) &= \texttt{if In b then $inv($AP$)$ else $inv($AQ$)$ end (pa,A)} \\
i(\texttt{while Wn b do AP end (pa,A)}) &= \texttt{while Wn b do $inv($AP$)$ end (pa,A)} \\
i(\texttt{begin Bn ADV ADP AP ARP ARV end}) &= \texttt{begin Bn $inv($ARV$)$ $inv($ARP$)$ $inv($AP$)$ $inv($ADP$)$} \\ &\phantom{=== } \texttt{$inv($ADV$)$ end} \\
i(\texttt{var X = v (pa,A)}) &= \texttt{remove X = v (pa,A)} \\
i(\texttt{proc Pn n is AP (pa,A)}) &= \texttt{remove Pn n is $inv($AP$)$ (pa,A)} \\
i(\texttt{call Cn n (pa,A)}) &= \texttt{call Cn n (pa,A)} \\
i(\texttt{remove X = v (pa,A)}) &= \texttt{var X = v (pa,A)} \\
i(\texttt{remove Pn n is AP (pa,A)}) &= \texttt{proc Pn n is $inv($AP$)$ (pa,A)}
\end{align*} }%

All inverted programs are of the annotated syntax in Section \ref{sec-ann}, but with \texttt{IP} and \texttt{IS} used for inverted programs and statements respectively. The inverse of \texttt{runB} and \texttt{runC} constructs simply invert the body.%Note $i(\texttt{runB AP end}) = \texttt{runB $inv($AP$)$ end}$, and $i(\texttt{runC Cn AP end}) = \texttt{runC Cn $inv($AP$)$ end A}$.

Starting with the final state of all environments from the forwards execution, the inverted program will \emph{no longer perform any expression evaluation}, offering potential time saving when compared to traditional cyclic debugging. The result of any expression evaluation that happened during forwards execution is retrieved from the appropriate stack on $\delta$. For example, a while loop will iterate until the top element of stack \texttt{W} on $\delta$ is no longer true. The non-determinism that possibly occurred during the forwards execution will also not feature in the inverted execution. The identifiers assigned to statements ensure that any statement can only execute provided it has the highest unseen identifier. Using the function \texttt{previous()}, and starting it with the final value of \texttt{next()}, all m-rules will be reversed in \emph{backtracking order}. There is freedom in the order of non m-rules, with examples being parallel skip operations, or parallel block closings. Reversing these in any order produces no adverse effects. %Another example is the closing of block statements in parallel.

We recognise that all environments will now store inverted programs wherever necessary.  Since an inverted program is of the same syntax as an annotated program, our current environments are sufficient. We now return to our example discussed in Section \ref{sec-ann}. %**We later define inverted environments, and the equivalence with annotated versions. 

\begin{figure}[t]
  \begin{minipage}[b]{0.49\linewidth}
   \centering
   {\small \begin{lstlisting}[xleftmargin=2.0ex,mathescape=true]
$\texttt{begin b1}$    
 $\texttt{proc p1 fib is}$ 
  $\texttt{begin b2}$
   $\texttt{var T}$ = $\texttt{0 (b2*b1,A)}$;
   $\texttt{if i1}$ $\texttt{(N - 2 > 0)}$ $\texttt{then}$
    $\texttt{T}$ = $\texttt{F + S (b2*b1,A)}$;
    $\texttt{F}$ = $\texttt{S (b2*b1,A)}$;
    $\texttt{S}$ = $\texttt{T (b2*b1,A)}$;
    $\texttt{N}$ = $\texttt{N - 1 (b2*b1,A)}$;
    $\texttt{call c2.0 fib (b2*b1,A)}$;
   $\texttt{end  (b2*b1,A)}$
   $\texttt{remove T}$ = $\texttt{0 (b2,A)}$;
  $\texttt{end}$
 $\texttt{end  (b1,[1])}$ 
 $\texttt{call c1 fib (b1,[21])}$;
 $\texttt{remove p1 fib is AP (b1,[22])}$; 
$\texttt{end}$
\end{lstlisting} }
    \caption{Executed annotated version}
	\label{big-ex-ann}
  \end{minipage}
  \hspace{0.01cm}
  \begin{minipage}[b]{0.49\linewidth}
    \centering
    {\small \begin{lstlisting}[xleftmargin=2.0ex,mathescape=true]
$\texttt{begin b1}$    
 $\texttt{proc p1 fib is}$ 
  $\texttt{begin b2}$
   $\texttt{var T}$ = $\texttt{0 (b2*b1,A)}$;
   $\texttt{if i1}$ $\texttt{(N - 2 > 0)}$ $\texttt{then}$
    $\texttt{call c2 fib (b2*b1,A)}$;
    $\texttt{N}$ = $\texttt{N - 1 (b2*b1,A)}$;
    $\texttt{S}$ = $\texttt{T (b2*b1,A)}$;
    $\texttt{F}$ = $\texttt{S (b2*b1,A)}$;
    $\texttt{T}$ = $\texttt{F + S (b2*b1,A)}$;   
   $\texttt{end  (b2*b1,A)}$
   $\texttt{remove T}$ = $\texttt{0 (b2,A)}$;
  $\texttt{end}$
 $\texttt{end  (b1,[22])}$
 $\texttt{call c1 fib (b1,[21])}$; 
 $\texttt{remove p1 fib is IP (b1,[1])}$;  
$\texttt{end}$
	\end{lstlisting} }
    \caption{Inverted version}
    \label{big-ex-inv}
  \end{minipage}
\end{figure}

\begin{example} We now execute the annotated version of our original program (Figure \ref{big-ex-re-ann}), producing the final annotated version shown in Figure \ref{big-ex-ann}. The initial value of \texttt{next()} is 1. We enter block \texttt{b1}, and perform the procedure declaration, assigning the identifier 1 to its stack. The call statement then happens, performing a renamed copy of the procedure body. This will execute lines 3-9 using identifiers 2-6, before hitting the recursive call. This call will then execute another renamed copy of the procedure body. This renamed copy for the second call is shown in Figure \ref{big-ex-ann-call}, where the unique call name \texttt{c1:c2} has been used to rename all constructs. Lines 1-7 are executed using identifiers 7-11, before again hitting a recursive call. This renamed version is then executed, with the conditional on line 5 evaluating to false, meaning the recursion is now finished. This version uses the identifiers 12-14, before the recursive calls begin to close.  The second call (Figure \ref{big-ex-ann-call}) then concludes, executing lines 8-11 using identifiers 15-17. The first call then concludes using the identifiers 18-20. Finally, the original program concludes, using identifier 21 to finish the call statement, and then the last identifier 22 (meaning \texttt{next()} = 23) to remove the procedure declaration. This concludes our execution, producing the final state \texttt{F=7}, \texttt{S=11} and \texttt{N=2}. 

We now consider the inverted execution. Application of the function $inv()$ to the executed annotated version (Figure \ref{big-ex-ann}) produces the inverted version shown in Figure \ref{big-ex-inv}. The initial value of \texttt{previous()} will be 22 (\texttt{next()} - 1). Inverse execution begins by opening the outer block and performing the procedure declaration (the inverse of the procedure removal) using identifier 22. The call statement is then entered using identifier 21. A renamed copy of the procedure body is then executed, performing lines 3-6 using identifiers 20-18. The recursive call is then hit, beginning the execution of another renamed copy, shown in Figure \ref{big-ex-inv-call}. Lines 1-4 are executed using identifiers 17-15, before the recursive call is hit again. The third renamed version now executes, and since the condition will be false (retrieved from the stack \texttt{B} on $\delta$), recursion now stops. This call concludes using identifiers 14-12, before the second call (Figure~\ref{big-ex-inv-call}) now concludes lines 5-11 using identifiers 11-7. The first call then concludes using identifiers 6-2. Finally, the original program concludes with the removal of the procedure (inverse of declaration) using identifier~1. This execution order restores the initial program state.  
 \end{example}  \label{ex-2}

\begin{figure}[t]
  \begin{minipage}[b]{0.49\linewidth}
   \centering
   {\small \begin{lstlisting}[xleftmargin=2.0ex,mathescape=true]
$\texttt{begin c1:c2:b2}$
 $\texttt{var T}$ = $\texttt{0 (c1:c2:b2*b1,[7])}$;
 $\texttt{if c1:c2:i1}$ $\texttt{(N - 2 > 0)}$ $\texttt{then}$
  $\texttt{T}$ = $\texttt{F + S (c1:c2:b2*b1,[8])}$;    
  $\texttt{F}$ = $\texttt{S (c1:c2:b2*b1,[9])}$;   
  $\texttt{S}$ = $\texttt{T (c1:c2:b2*b1,[10])}$;   
  $\texttt{N}$ = $\texttt{N - 1 (c1:c2:b2*b1,[11])}$;   
  $\texttt{call c1:c2:c2 fib (c1:c2:b2*b1,[15])}$;    
 $\texttt{end  (c1:c2:b2*b1,[16])}$
 $\texttt{remove T}$ = $\texttt{0 (c1:c2:b2*b1,[17])}$;
$\texttt{end}$
\end{lstlisting} }
    \caption{Executed annotated version of 2nd call}
	\label{big-ex-ann-call}
  \end{minipage}
  \hspace{0.01cm}
  \begin{minipage}[b]{0.49\linewidth}
    \centering
    {\small \begin{lstlisting}[xleftmargin=2.0ex,mathescape=true]
$\texttt{begin c1:c2:b2}$
 $\texttt{var T}$ = $\texttt{0 (c1:c2:b2*b1,[17])}$;
 $\texttt{if c1:c2:i1}$ $\texttt{(N - 2 > 0)}$ $\texttt{then}$
  $\texttt{call c1:c2:c2 fib (c1:c2:b2*b1,[15])}$;   
  $\texttt{N}$ = $\texttt{N - 1 (c1:c2:b2*b1,[11])}$;     
  $\texttt{S}$ = $\texttt{T (c1:c2:b2*b1,[10])}$;   
  $\texttt{F}$ = $\texttt{S (c1:c2:b2*b1,[9])}$;     
  $\texttt{T}$ = $\texttt{F + S (c1:c2:b2*b1,[8])}$;      
 $\texttt{end  (c1:c2:b2*b1,[16])}$
 $\texttt{remove T}$ = $\texttt{0 (c1:c2:b2*b1,[7])}$;
$\texttt{end}$
	\end{lstlisting} }
    \caption{Inverted version of 2nd call}
    \label{big-ex-inv-call}
  \end{minipage}
\end{figure}

Prior to defining the inverse operational semantics, we must introduce the function \emph{setAI()}. This takes the output of the function \emph{getAI()} from Section \ref{sec-ann} and a program, and returns a copy of this program with the given annotation information inserted. Recall the functions \emph{IreP} and \emph{IreL} from Section \ref{ssec-scope}.

We now give the operational semantics of inverted programs. We introduce \emph{reverse m-rules}, the name given to all statements of the inverse execution that use an identifier. Transitions $\revtran{m}$ are also called reverse identifier transitions. All other rules remain non m-rules. We note a correspondence between each m-rule and the matching reverse m-rule. 

%We now outline the operational semantics for annotated forwards execution. These are defined similarly to Section \ref{sec-trad}, but will now perform all state-saving of any necessary reversal information. Arithmetic and boolean expressions are exactly as before. By abuse of notation, we now use $\square$ to represent the set of annotated environments. With confiigurations as before, transitions are now of the form $\emph{configuration} \tran{\circ} \emph{configuration}$, where $\rightarrow$ can be labelled with an identifier, named an \emph{m-action}, if that transition assigns one.

\vspace{.1cm}
\noindent \textbf{Sequential and Parallel Composition} The inverted program is still executed forwards, meaning these are like those in Section \ref{sec-ann}, but with $\rightsquigarrow$ replacing $\rightarrow$, and \texttt{IP} and \texttt{IS} replacing \texttt{AP} and \texttt{AS} respectively. Each rule is named correspondingly to Section \ref{sec-ann}, but with the appended `\texttt{a}' replaced with `\texttt{r}'. For example, rule \text{[S1a]} is now \text{[S1r]}.
%{\small \begin{align*} 
%&\text{[RS1]} \quad \frac{(\texttt{IS} \mid \square) \revtran{\circ} (\texttt{IS$'$} \mid \square')}{(\texttt{IS; IP} \mid \square) \revtran{\circ} (\texttt{IS$'$; IP} \mid \square')} &\quad &\text{[RS2]} \quad \frac{}{(\texttt{skip I; IP} \mid \square) \rightsquigarrow (\texttt{IP} \mid \square)} \\[6pt]
%&\text{[RP1]} \quad \frac{(\texttt{IP} \mid \square) \revtran{\circ} (\texttt{IP$'$} \mid \square')}{(\texttt{IP par IQ} \mid \square) \revtran{\circ} (\texttt{IP$'$ par IQ} \mid \square')} &\quad &\text{[RP2]} \quad \frac{(\texttt{IQ} \mid \square) \revtran{\circ} (\texttt{IQ$'$} \mid \square')}{(\texttt{IP par IQ} \mid \square) \revtran{\circ} (\texttt{IP par IQ$'$} \mid \square')}  \\[6pt]
%&\text{[RP3]} \quad \frac{}{(\texttt{IP par skip I} \mid \square) \rightsquigarrow (\texttt{IP} \mid \square')} &\quad
%&\text{[RP4]} \quad \frac{}{(\texttt{skip I par IQ} \mid \square) \rightsquigarrow (\texttt{IQ} \mid \square')}
%\end{align*} }%

\vspace{.1cm}
\noindent \textbf{Assignment} The inverse of an assignment will be a reverse m-rule, allowed to execute provided the top element of stack \texttt{A} is \texttt{m} (written \texttt{A = m:A$'$}), and \texttt{m} is the last used identifier (via \texttt{previous()}). This retrieves the old value, with matching identifier \texttt{m}, from the stack on $\delta$ for this variable, and assigns it to the corresponding location (evaluated as in Section \ref{sec-ann}).

\vspace{-.5cm}
{\small \begin{align*}
&\text{[D1r]} \quad \frac{\texttt{A = m:A$'$} \quad \texttt{m = previous()} \quad  \texttt{$evalVar(\gamma$,pa,X$)$ = l} \quad \texttt{$\delta($X$)$ = (m,v):X$'$}}{(\texttt{X = e (pa,A)} \mid \delta,\sigma,\square) \revtran{m} (\texttt{skip A$'$} \mid \delta[\texttt{X/X$'$}],\sigma[\texttt{l} \mapsto \texttt{v}],\square)}
\end{align*} }%

\vspace{.1cm}
\noindent \textbf{Conditional} This will begin with the reverse m-rule \text{[I1r]} that, provided this statement has the next identifier to invert (via \texttt{previous()} and \texttt{A = m:A$'$}) retrieves the boolean value from the stack \texttt{B} with matching identifier \texttt{m} on $\delta$. Rules \text{[I2r]} and \text{[I3r]} follow Section \ref{sec-ann}, but with inverted programs. Finally, \text{[I4r]} and \text{[I5r]} simply concludes the statement.

\vspace{-.5cm}
{\small \begin{align*}
&\text{[I1r]} \quad \frac{\texttt{A = m:A$'$} \quad \texttt{m = previous()} \quad \texttt{$\delta($B$)$ = (m,V):B$'$}}{(\texttt{S} \mid \delta,\square) \revtran{m} (\texttt{if In V then IP else IQ end (pa,A$'$)} \mid \delta[\texttt{B/B$'$}],\square)} \\[4pt] & \phantom{\text{[I1r]} \quad } \text{where } \texttt{S} = \texttt{if In b then IP else IQ end (pa,A)} \\[6pt]
&\text{[I2r]} \quad \frac{(\texttt{IP} \mid \square) \revtran{\circ} (\texttt{IP$'$} \mid \square')}{(\texttt{if In T then IP else IQ end (pa,A)} \mid \square) \revtran{\circ} (\texttt{if In T then IP$'$ else IQ end (pa,A)} \mid \square')} \\[6pt]
&\text{[I3r]} \quad \frac{(\texttt{IQ} \mid \square) \revtran{\circ} (\texttt{IQ$'$} \mid \square')}{(\texttt{if In F then IP else IQ end (pa,A)},\square) \revtran{\circ} (\texttt{if In F then IP else IQ$'$ end (pa,A)} \mid \square')} \\[6pt]
&\text{[I4r]} \quad \frac{}{(\texttt{if In T then skip I else IQ end (pa,A)} \mid \square) \rightsquigarrow (\texttt{skip} \mid \square)} \\[6pt]
&\text{[I5r]} \quad \frac{}{(\texttt{if In F then IP else skip I end (pa,A)} \mid \square) \rightsquigarrow (\texttt{skip} \mid \square)} 
\end{align*} }%

\vspace{.5cm}
\noindent \textbf{While Loop} The reverse m-rule \text{[W1r]} handles the first iteration of a loop, retrieving either the \texttt{T} or \texttt{F} from the stack \texttt{B} on $\delta$, and creating a mapping on $\beta$. This mapping is similar to that of \text{[W1a]} but with an inverted copy of the loop \texttt{IP$'$} that is both renamed and updated with annotated information \texttt{C}, retrieved from the stack \texttt{WI} (via $setAI()$). The reverse m-rule \text{[W2r]} handles all iterations except the first, meaning the renaming is applied to the current mapping, written $\beta$[\texttt{Wn} $\Rightarrow$ \texttt{IR$'$}]. Both rules only execute provided they have the last used identifier (via \texttt{previous()} and \texttt{A = m:A$'$}). The body is then executed repeatedly with changes reflected to the stored copy, written $\beta'$[\texttt{Wn} $\Rightarrow$ \texttt{IR$'$}], via rule \text{[W3r]}. The loop continues through the rule \text{[W4r]}, until the condition is false and thus the mapping removed by rule \text{[W5r]}. 

\vspace{-.3cm}
{\small \begin{align*}
&\text{[W1r]} \quad \frac{\texttt{m = previous()} \quad \texttt{A = m:A$'$} \quad \texttt{$\beta($Wn$)$ = }\emph{und} \quad \texttt{$\delta($WI$)$=(m,C):WI$'$} \quad \texttt{IP$'$ = $IreL($setAI(IP,C)$)$}}{(\texttt{S} \mid \delta,\beta,\square) \revtran{m} (\texttt{while Wn b do IP$'$ end (pa,A$'$)} \mid \delta[\texttt{WI/WI$'$}],\beta[\texttt{Wn} \Rightarrow \texttt{IR}],\square)} \\[4pt] & \phantom{\text{[W1r]} \quad } \text{where } \texttt{S} = \texttt{while Wn b do IP end (pa,A)} \text{and } \texttt{IR} = \texttt{while Wn b do IP$'$ end (pa,A$'$)}  \\[6pt]
&\text{[W2r]} \quad \frac{\texttt{m = previous()} \quad \texttt{A = m:A$'$} \quad \texttt{$\beta($Wn$)$ = while Wn b do IQ end (pa,A)} \quad \texttt{$\delta($W$)$ = (m,V):W$'$}}{(\texttt{S} \mid \delta,\beta,\square) \revtran{m} (\texttt{while Wn V do $IreL($IQ$)$ end (pa,A$'$)} \mid \delta[\texttt{W/W$'$}],\beta[\texttt{Wn} \Rightarrow \texttt{IR$'$}],\square)} \\[4pt] & \phantom{\text{[W2r]} \quad } \text{where } \texttt{S} = \texttt{while Wn b do IP end (pa,A)} \text{ and } \texttt{IR$'$} = \texttt{while Wn b do $IreL($IQ$)$ end (pa,A$'$)}  \\[6pt]
&\text{[W3r]} \quad \frac{(\texttt{IR} \mid \delta,\beta,\square) \revtran{\circ} (\texttt{IR$'$} \mid \delta',\beta',\square')}{(\texttt{while Wn T do IR end (pa,A)} \mid \delta,\beta,\square) \revtran{\circ} (\texttt{while Wn T do IR$'$ end (pa,A)} \mid \delta',\beta'',\square')} \\[4pt] & \phantom{\text{[W3r]} \quad } \text{where } \texttt{$\beta''$} = \texttt{$\beta'$}[\emph{refW(}\texttt{Wn,IR$'$}\emph{)}] \\[6pt]
&\text{[W4r]} \quad \frac{\texttt{$\beta($Wn$)$ = IP} }{(\texttt{while Wn T do skip I end (pa,A)} \mid \beta,\square) \rightsquigarrow (\texttt{IP} \mid \beta,\square)}  \\[6pt]
&\text{[W5r]} \quad \frac{\texttt{$\beta($Wn$)$ = IR}}{(\texttt{while Wn F do IP end (pa,A)} \mid \beta,\square) \rightsquigarrow (\texttt{skip} \mid \beta[\texttt{Wn}],\square)}  
\end{align*} }% 

\vspace{.1cm}
\noindent \textbf{Block} The inversion of a block is very similar to that of Section \ref{sec-ann}, but with the inverted syntax. Recall that a variable declaration in the inverted syntax is the inverse of a variable removal, and similarly for procedures. This means an inverted block has a body of the form shown below as \texttt{IP} in \text{[B1r]}.

\vspace{-.5cm}
{\small \begin{align*}
&\text{[B1r]} \quad \frac{}{(\texttt{begin Bn IP end} \mid \square) \rightsquigarrow (\texttt{runB IP end} \mid \square)} \quad \text{where } \texttt{IP } \text{=} \texttt{ IDV;IDP;IQ;IRP;IRV} \\[6pt] 
&\text{[B2r]} \quad \frac{(\texttt{IP} \mid \square) \revtran{\circ} (\texttt{IP$'$} \mid \square')}{(\texttt{runB IP end} \mid \square) \revtran{\circ} (\texttt{runB IP$'$ end} \mid \square')}  \quad \text{[B3r]} \quad \frac{}{(\texttt{runB skip I end} \mid \square) \rightsquigarrow (\texttt{skip} \mid \square)}  
\end{align*} }%

\vspace{.1cm}
\noindent \textbf{Variable and Procedure Declaration} Reverse variable declaration \text{[L1r]} is a reverse m-rule, allowed to execute provided it has the last used identifier (via \texttt{previous()} and \texttt{A = m:A$'$}). The given value is ignored, and the variable is instead set to its final value retrieved from the corresponding stack on $\delta$ (written \texttt{$\delta($X$)$} \texttt{=} \texttt{(m,v$'$):X$'$}). The location \texttt{l} is used as in Section \ref{sec-ann}. A procedure declaration \text{[L2r]} creates the basis mapping as in Section \ref{sec-ann}, but with inverted programs, provided its identifiers allow this.

\vspace{-.5cm}
{\small \begin{align*}
&\text{[L1r]} \quad \frac{\texttt{A = m:A$'$} \quad \texttt{m = previous()} \quad \texttt{$\delta($X$)$ = (m,v$'$):X$'$} \quad \texttt{nextLoc() = l} \quad \texttt{pa} = \texttt{Bn*pa$'$} }{(\texttt{var X = v (pa,A)} \mid \delta,\sigma,\gamma,\square) \revtran{m} (\texttt{skip A$'$} \mid \delta[\texttt{X/X$'$}],\sigma[\texttt{l} \mapsto \texttt{v$'$}],\gamma[(\texttt{X},\texttt{Bn}) \Rightarrow \texttt{l}],\square)} \\[6pt]
&\text{[L2r]} \quad \frac{\texttt{A = m:A$'$} \quad \texttt{m = previous()}}{(\texttt{proc Pn n is IP (pa,A)} \mid \mu,\square) \revtran{m} (\texttt{skip A$'$} \mid \mu[\texttt{Pn} \Rightarrow \texttt{(n,IP)}],\square)}
\end{align*} }%

\vspace{.1cm}
\noindent \textbf{Procedure Call} The reverse m-rule \text{[G1r]} creates a renamed copy \texttt{IP$'$} of the basis procedure body \texttt{IP}, that has annotated changes \texttt{C} from stack \texttt{WI} inserted. This is inserted into $\mu$ via $\mu$[\texttt{Cn} $\Rightarrow$ \texttt{(n,IP$'$)}]. Rule \text{[G2r]} follows Section \ref{sec-ann} but uses inverted programs, while \text{[G3r]} removes the mapping.

\vspace{-.5cm}
{\small \begin{align*}
&\text{[G1r]} \quad \frac{\texttt{m = previous()} \quad \texttt{A = m:A$'$} \quad \texttt{$\mu(evalP($n,pa$)$) = (n,IP)} \quad \texttt{$\delta($Pr$)$ = (m,C):Pr$'$} }{(\texttt{call Cn n (pa,A)} \mid \mu,\square) \revtran{m} (\texttt{runC Cn IP$'$ end A$'$} \mid \mu[\texttt{Cn} \Rightarrow \texttt{(n,IP$'$)}],\square)} \\&\phantom{[RG1] \quad} \text{where } \texttt{IP$'$} = \texttt{$IreP(setAI($IP,C$)$,Cn$)$}  \\[6pt]
&\text{[G2r]} \quad \frac{(\texttt{IP} \mid \mu,\square) \revtran{\circ} (\texttt{IP$'$} \mid \mu',\square')}{(\texttt{runC Cn IP end A} \mid \mu,\square) \revtran{\circ} (\texttt{runC Cn IP$'$ end A} \mid \mu'[\emph{refC(}\texttt{Cn},\texttt{IP$'$}\emph{)}],\square')} \\[6pt]
&\text{[G3r]} \quad \frac{\texttt{$\mu$(Cn) = IP}}{(\texttt{runC Cn skip I end A} \mid \mu,\square) \rightsquigarrow (\texttt{skip A} \mid \mu[\texttt{Cn}],\square)}
\end{align*} }%

\vspace{.1cm}
\noindent \textbf{Variable and Procedure Removal} Variable removal \text{[H1r]} is similar to Section \ref{sec-trad}, as no state-saving is required. But it is a reverse m-rule and can execute provided \texttt{A = m:A$'$} and \texttt{m = previous()}. Procedure removal \text{[H2r]} is a reverse m-rule with no state-saving, removing the mapping.

\vspace{-.5cm}
{\small \begin{align*}
&\text{[H1r]} \quad \frac{\texttt{A = m:A$'$} \quad \texttt{m = previous()} \quad \texttt{pa} = \texttt{Bn*pa$'$} \quad \texttt{$\gamma($X,Bn$)$ = l}}{(\texttt{remove X = v (pa,A)} \mid \delta,\sigma,\gamma,\square) \revtran{m} (\texttt{skip A$'$} \mid \delta,\sigma[\texttt{l} \mapsto \texttt{0}],\gamma[(\texttt{X},\texttt{Bn})],\square)} \\[2pt]
&\text{[H2r]} \quad \frac{\texttt{A = m:A$'$} \quad \texttt{m = previous()} \quad \texttt{$\mu$(Pn) = IQ}}{(\texttt{remove Pn n is IP (pa,A)} \mid \mu,\square) \revtran{m} (\texttt{skip A$'$} \mid \mu[\texttt{Pn}],\square)}
\end{align*} }%

\subsection{Results} \label{ssec-inv-res}
Prior to describing our inversion results, we first note that the definitions in Section \ref{ssec-ann-res} are now used to relate annotated environments with inverted environments, instead of traditional. As an example, Definition~\ref{def-proc} would no longer have that \texttt{P} = $removeAnn($\texttt{AP}$)$, but instead that \texttt{IP} = $inv($\texttt{P}$)$. %, with the small change that 

Theorem \ref{res-inv-1} states that identifiers are used in descending order throughout a reverse execution (opposite of Theorem \ref{res-ann-1}). Theorem \ref{res-inv-2} shows that if an original \emph{sequential} program and its annotated execution terminate, then the reverse execution will also, and that the reverse execution beginning in the final state can restore the initial state. %We claim the Theorem holds for parallel programs (proof in progress now).%We have a simulation tool currently in development, that presently can invert programs containing parallelism with assignments, conditionals and loops. Tests using this subset of our language seemingly support the validation of these results. Work will continue on this simulator to support the wider language.

\begin{theorem} \label{res-inv-1}
Let \textup{\texttt{P}} and \textup{\texttt{Q}} be original programs, \textup{\texttt{AP}} and \textup{\texttt{AQ}} be the annotated versions producing the executed versions \textup{\texttt{AP$'$}} and \textup{\texttt{AQ$'$}} respectively, and \textup{\texttt{IP}} and \textup{\texttt{IQ}} be the inverted versions \textup{\texttt{$inv($AP$')$}} and \textup{\texttt{$inv($AQ$')$}} respectively. Further let \textup{\texttt{$\square$}} be the set of all environments and $\delta$ be the auxiliary store. If $ (\textup{\texttt{IP} $\mid$ $\square$, $\delta$}) \revtran{\circ}^* (\textup{\texttt{IP$'$} $\mid $ $\square'$, $\delta'$}) \revtran{\textup{\text{n}}} (\textup{\texttt{IQ} $\mid$ $\square''$, $\delta''$}) \rightsquigarrow^* (\textup{\texttt{IQ$'$} $\mid$ $\square'''$, $\delta'''$}) \revtran{\textup{\text{m}}} (\textup{\texttt{IQ$''$} $\mid$ $\square''''$, $\delta''''$}) $, and provided that the computation $ (\textup{\texttt{IQ} $\mid$ $\square''$, $\delta''$}) \rightsquigarrow^* (\textup{\texttt{IQ$'$} $\mid$ $\square'''$, $\delta'''$}) $  does not include any identifier transitions, then \textup{\text{m}} = \textup{\text{n}} - \textup{\texttt{1}}.
\end{theorem}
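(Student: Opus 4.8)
The plan is to mirror the argument of Theorem \ref{res-ann-1}, but tracking the \texttt{previous()} counter instead of \texttt{next()}. The key invariant is that the reverse execution maintains a global counter, initialised to the final value of \texttt{next()} minus one (as described before Example~2), and that this counter is consulted and decremented exactly by the reverse m-rules, i.e. precisely those transitions labelled $\revtran{k}$ for some identifier $k$. Concretely, I would first state and use the following fact about the inverse semantics: every reverse m-rule (\text{[D1r]}, \text{[I1r]}, \text{[W1r]}, \text{[W2r]}, \text{[W5r]} via its predecessors --- more precisely all rules with a $\revtran{m}$ conclusion: \text{[D1r]}, \text{[I1r]}, \text{[W1r]}, \text{[W2r]}, \text{[L1r]}, \text{[L2r]}, \text{[G1r]}, \text{[H1r]}, \text{[H2r]}) has a premise of the form \texttt{m = previous()} together with \texttt{A = m:A$'$}, and firing it both reads the current counter value as its label \texttt{m} and decrements \texttt{previous()} by one; whereas every non m-rule leaves the counter untouched. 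This is immediate by inspection of the rules in Section~\ref{sec-inv}, exactly as the dual statement was immediate for Section~\ref{sec-ann}.

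Given this, the argument is short. In the hypothesised computation, the transition $\revtran{\text{n}}$ fires a reverse m-rule; at that moment the counter holds the value \text{n} (this is what the premise \texttt{n = previous()} asserts), and after firing, the counter holds \text{n} - \texttt{1}. The subsequent segment $(\texttt{IQ} \mid \square'',\delta'') \rightsquigarrow^* (\texttt{IQ$'$} \mid \square''',\delta''')$ contains, by assumption, no identifier transitions, hence consists solely of non m-rules, none of which alter the counter; so immediately before the transition $\revtran{\text{m}}$ the counter still holds \text{n} - \texttt{1}. But that transition is again a reverse m-rule, whose premise \texttt{m = previous()} forces \text{m} to equal the current counter value, namely \text{n} - \texttt{1}. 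Therefore \text{m} = \text{n} - \texttt{1}, as required. The congruence and structural rules (\text{[S1r]}, \text{[P1r]}, \text{[P2r]}, \text{[I2r]}, \text{[I3r]}, \text{[W3r]}, \text{[B2r]}, \text{[G2r]}) propagate a $\revtran{\circ}$ label and so do not interfere: they are m-rules precisely when their premise transition is, and in that case they relay the same counter read/decrement, so the bookkeeping above is unaffected.

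The main obstacle --- really the only point needing care --- is justifying that the counter is genuinely global and single-threaded despite the interleaving parallel composition: rules \text{[P1r]} and \text{[P2r]} let either branch of a \texttt{par} perform the next step, so one must argue that \texttt{previous()} is a shared atomic resource (as \texttt{next()} was declared to be for the forwards semantics) and hence that the "current counter value'' is well defined at each point of the derivation regardless of which branch acts. This is a design assumption inherited from the treatment of \texttt{next()} in Section~\ref{sec-ann}, so it can be invoked rather than reproved. A secondary subtlety is that a reverse m-rule also requires \texttt{A = m:A$'$}, i.e. the acting statement must carry \text{m} on top of its identifier stack; but this only restricts \emph{which} statement may fire, not the value read from the counter, so it plays no role in the numerical conclusion \text{m} = \text{n} - \texttt{1}. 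The remaining details are entirely analogous to the proof of Theorem~\ref{res-ann-1}.
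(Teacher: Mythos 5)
Your proposal is correct and follows essentially the same route as the paper: the paper's proof simply notes that \texttt{previous()} maintains the identifier order and refers back to the argument for Theorem \ref{res-ann-1}, which is exactly the counter read--decrement bookkeeping you spell out. Your version is merely more explicit (in particular about the congruence rules and the atomicity of the shared counter under \texttt{par}), which the paper leaves implicit.
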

\begin{proof} The order of identifiers is maintained using the function \texttt{previous()}. This proof follows closely to the argument within the proof of Theorem \ref{res-ann-1}, but uses $\revtran{\circ}$ in place of $\tran{\circ}$. %Hence, \text{m} = \text{n} - \texttt{1}
%The program \texttt{IP} begins with any number of steps, potentially using identifiers. At some point, a transition occurs that will use the current value of $previous()$, \texttt{n}, while decrementing this by one to \texttt{m}. Any number of non m-actions can occur, before the next m-rule uses $previous()$ to get \texttt{m}. Therefore two successive reverse m-rules share the relation \text{m} = \text{n} - \texttt{1}. 
\end{proof}

\begin{theorem} \label{res-inv-2}
Let \textup{\texttt{P}} be a sequential program (does not contain \textup{\texttt{par}}) and  \textup{\texttt{AP}} be \textup{\texttt{$ann($P$)$}}. Further let \textup{\texttt{$\square$}} be the set \{$\sigma$,$\gamma$,$\mu$,$\beta$\} of all environments, \textup{\texttt{$\square_1$}} be the set \{$\sigma_1$,$\gamma_1$,$\mu_1$,$\beta_1$\} of annotated environments such that $\square \approx \square_1$, \textup{\texttt{$\square_1'$}} be the set \{$\sigma_1'$,$\gamma_1'$,$\mu_1'$,$\beta_1'$\} of final annotated environments, \textup{\texttt{$\square_2$}} be the set \{$\sigma_2$,$\gamma_2$,$\mu_2$,$\beta_2$\} of inverted environments such that $\square_2 \approx \square_1'$, $\delta$ be the auxiliary store, $\delta'$ be the final auxiliary store and $\delta_2$ be the inverted auxiliary store such that $\delta_2 \approx_A \delta'$.

\begin{enumerate}[ 1{)} ]
\item If $ (\textup{\texttt{P} $\mid$ $\square$,$\delta$}) \hookrightarrow^* (\textup{\texttt{skip} $\mid$ $\square'$,$\delta$}) $, for some $\square'$, and there exists an annotated execution $ (\textup{\texttt{AP} $\mid$ $\square_1$,$\delta$}) \tran{\circ}^*  (\textup{\texttt{skip I} $\mid$ $\square_1'$,$\delta'$}) $, for some \textup{\texttt{I}}, $\square_1'$ and $\delta'$ such that the executed annotated version of \textup{\texttt{AP}} produced by its execution is \textup{\texttt{AP$'$}}, then there also exists $ (\textup{\texttt{IP} $\mid$ $\square_2$,$\delta_2$}) \revtran{\circ}^* (\textup{\texttt{skip I$'$} $\mid$ $\square_2'$,$\delta_2'$}) $, for \textup{\texttt{IP}} = \textup{\texttt{$inv($AP$')$}} and some \textup{\texttt{I$'$}}, $\square_2'$ and $\delta_2'$. (Termination)
\item If $ (\textup{\texttt{P} $\mid$ $\square$,$\delta$}) \hookrightarrow^* (\textup{\texttt{skip} $\mid$ $\square'$,$\delta$}) $, for some $\square'$, and there exists an annotated execution $ (\textup{\texttt{AP} $\mid$ $\square_1$,$\delta$}) \tran{\circ}^* (\textup{\texttt{skip I} $\mid$ $\square_1'$,$\delta'$}) $, for some \textup{\texttt{I}}, $\square_1'$ and $\delta'$, such that $\square' \approx \square_1'$ and that the executed annotated version of \textup{\texttt{AP}} produced by its execution is \textup{\texttt{AP$'$}}, then there also exists $ (\textup{\texttt{IP} $\mid$ $\square_2$,$\delta_2$}) \revtran{\circ}^* (\textup{\texttt{skip I$'$} $\mid$ $\square_2'$,$\delta_2'$}) $, for \textup{\texttt{IP}} = \textup{\texttt{$inv($AP$')$}} and some  \textup{\texttt{I$'$}}, $\square_2'$ and $\delta_2'$, such that $\square_2' \approx \square$ and $\delta_2' \approx_A \delta$.
\end{enumerate}
\end{theorem}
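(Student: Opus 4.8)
### Proof Proposal

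The plan is to prove both parts simultaneously by induction on the structure of the sequential program \texttt{P}, mirroring the structure of the forwards annotated semantics but run in reverse. The natural invariant to carry through the induction is a \emph{correspondence lemma}: if a statement (or program fragment) \texttt{AS} executed forwards from environments $\square_1,\delta$ to $\square_1',\delta'$ using the m-rule identifiers in some interval $[n_1,n_2]$, then $inv(\texttt{AS}')$ — where $\texttt{AS}'$ is the executed annotated residue — executes in reverse from any $\square_2,\delta_2$ with $\square_2 \approx \square_1'$ and $\delta_2 \approx_A \delta'$ down to $\square_2',\delta_2'$ with $\square_2' \approx \square_1$ and $\delta_2' \approx_A \delta$, consuming exactly the same identifiers in descending order. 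This is the reverse-direction analogue of Lemma~\ref{claim-1}, and Theorem~\ref{res-inv-1} is what guarantees the identifiers line up. I would first state and prove this per-statement correspondence lemma, then lift it to sequential composition, which is where the identifier bookkeeping does the real work.

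First I would handle the base cases: \texttt{skip}, assignment, variable/procedure declaration and removal. For assignment, the forwards rule \text{[D1a]} pushed $(\texttt{m},\sigma(\texttt{l}))$ onto the stack for \texttt{X} and wrote the new value; the reverse rule \text{[D1r]} fires exactly when \texttt{m = previous()} and the top of \texttt{X}'s stack carries identifier \texttt{m}, pops it, and restores the old value to \texttt{l}. One checks that \emph{evalV} in the forwards rule and \emph{evalVar} in the reverse rule pick the same location because $\gamma$ agrees on \texttt{dom} (Definition~\ref{def-sig-gam}) and the path \texttt{pa} is unchanged by annotation/inversion. The declaration/removal cases are symmetric: $i(\texttt{var}\ldots) = \texttt{remove}\ldots$ and vice versa, so the reverse execution of the inverted program literally re-runs the forwards rule of the dual construct, and the $\delta$-stack push in \text{[H1a]} is cancelled by the pop in \text{[H1r]} (the reverse of a declaration, \text{[L1r]}, reads the final value off \texttt{X}'s stack to undo the removal). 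For the conditional and while loop I would appeal to the fact that the inverted program performs \emph{no} expression evaluation: the branch taken is read back from stack \texttt{B} via \text{[I1r]}, and the loop's boolean sequence is read back from \texttt{W} via \text{[W1r]}/\text{[W2r]}, with the annotation information restored through \emph{setAI} from stack \texttt{WI}; so the reverse loop performs exactly as many iterations, in the exact reverse order, as the forwards loop did, and the inner body is handled by the induction hypothesis applied to $inv$ of the (renamed) body. The procedure call case is analogous, using stack \texttt{Pr} and \emph{IreP}/\emph{setAI} to reconstruct the renamed body that was executed forwards, with \text{[G3r]} popping what \text{[G1a]}'s companion \text{[G3a]} pushed; here one also uses Definition~\ref{def-proc} (in its inverted reading, $\texttt{IP} = inv(\texttt{P})$) to know $\mu_2$ holds the right inverted bodies.

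The inductive step for \texttt{P = AP;AQ}: the forwards execution $(\texttt{AP;AQ}\mid\square_1,\delta) \tran{\circ}^* (\texttt{skip I}\mid\square_1',\delta')$ splits, via rules \text{[S1a]}/\text{[S2a]} (no interleaving since \texttt{P} is sequential), into $(\texttt{AP}\mid\square_1,\delta)\tran{\circ}^*(\texttt{skip}\ldots)$ followed by $(\texttt{AQ}\mid\cdot)\tran{\circ}^*(\texttt{skip I})$, where \texttt{AP} consumes a lower identifier-interval than \texttt{AQ}. Since $inv(\texttt{AP;AQ}) = inv(\texttt{AQ}); inv(\texttt{AP})$, and \texttt{previous()} starts at the final \texttt{next()} value, the reverse execution runs $inv(\texttt{AQ}')$ first (it owns the higher identifiers) then $inv(\texttt{AP}')$ — exactly the ascending-vs-descending symmetry. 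I would use Theorem~\ref{res-inv-1} to certify that control cannot "leak" from $inv(\texttt{AQ}')$ into $inv(\texttt{AP}')$ prematurely: any reverse m-rule of $inv(\texttt{AP}')$ is guarded by \texttt{A = m:A$'$} with \texttt{m = previous()}, and that guard is false until all of $inv(\texttt{AQ}')$'s larger identifiers have been consumed. Compose the two induction hypotheses and the environment/auxiliary-store equivalences chain through. The \texttt{par} and block cases are then immediate — \texttt{par} does not occur by hypothesis, and a block just wraps a sequential body whose declaration/removal prefix and suffix are swapped by $i(\texttt{begin}\ldots)$, so the \text{[B1r]}--\text{[B3r]} rules reduce it to the sequential case inside \texttt{runB}.

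The main obstacle I anticipate is not any single rule but the \emph{global coherence of the stored program copies} in $\mu$ and $\beta$ across the reverse run: every forwards step that touched a loop or call body also wrote the updated (annotated) body back via \emph{refW}/\emph{refC}, and the reverse semantics must re-derive, through \emph{setAI}, \emph{IreL} and \emph{IreP}, exactly the inverted counterpart of whatever renamed body was live at the corresponding forwards moment — including nested recursive calls whose construct names were prefixed with call identifiers like \texttt{c1:c2}. Making the induction hypothesis strong enough to state precisely what $\mu_2$ and $\beta_2$ contain at each point (so that $\mu_2 \approx_P \mu_1$ and $\beta_2 \approx_W \beta_1$ are maintained step-by-step, not just at the endpoints) is the delicate part; I would isolate this as a separate invariant about \emph{refW}/\emph{refC} and \emph{getAI}/\emph{setAI} being mutually inverse on the class of programs that actually arise, and prove it by the same induction, so that the environment-equivalence obligations in parts 1) and 2) discharge uniformly. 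Given that lemma, termination (part 1) is just "the reverse derivation tree exists and is finite", which follows because it has the same length as the forwards one, and state restoration (part 2) is the endpoint instance of the correspondence lemma at \texttt{P} itself.
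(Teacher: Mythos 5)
Your overall strategy --- a per-statement correspondence lemma relating each forwards m-rule to its reverse m-rule, lifted to sequential composition via the ascending/descending identifier discipline --- is in the spirit of the paper's (very terse) proof, which, like the proof of Theorem~\ref{res-ann-2}, rests on a single-step correspondence plus an induction. The one concrete problem is your choice of induction measure. You propose induction on the \emph{structure} of \texttt{P}, but in the procedure-call case you then ``apply the induction hypothesis to $inv$ of the (renamed) body''. The renamed body $reP(\texttt{AP},\texttt{Cn})$ is not a structural subterm of the statement \texttt{call Cn n (pa,A)} --- it is fetched from $\mu$ at run time --- and since recursion is permitted (see the Fibonacci example, Figures~\ref{big-ex-org}--\ref{big-ex-inv-call}), the chain of bodies unfolded by nested calls is unbounded and bears no relation to the syntactic size of \texttt{P}. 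A milder form of the same issue arises for while loops, where the number of iterations is a run-time quantity rather than a structural one. So the induction as stated is not well-founded. The paper avoids this by inducting on the \emph{length of the forwards derivation} $(\texttt{P} \mid \square,\delta) \hookrightarrow^* (\texttt{skip} \mid \square',\delta')$: each recursive call's execution and each loop iteration corresponds to a strictly shorter sub-derivation, so the induction hypothesis legitimately applies to it. Your argument can be repaired by switching to that measure (or a lexicographic combination); everything else --- the treatment of the $\delta$-stacks, the $getAI$/$setAI$ and $reP$/$IreP$ coherence invariant on $\mu$ and $\beta$ that you rightly isolate as the delicate part, and the identifier-interval argument for \texttt{AP;AQ} --- carries over unchanged. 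One minor remark: for purely sequential programs the rules [S1r]/[S2r] already force $inv(\texttt{AQ}')$ to run before $inv(\texttt{AP}')$, so the identifier guard is doing less work for control flow here than you suggest; its real job in this theorem is to select the correct entries popped from $\delta$.
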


\begin{proof}
This proof is by induction on the length of the sequence $(\texttt{P} \mid \square,\delta) \hookrightarrow^* (\texttt{skip} \mid \square',\delta')$. %We consider the case of sequential composition and prove the result is valid. Due to space constraints, we omit this proof here. The parallel case is currently being considered. We note that the implication in the other direction would also be valid, but defer this proof to future work. %We note that this proof currently only considered sequential programs, that is without the use of the \texttt{par} construct. We believe that this will extend to include parallelism, but defer this proof to future work. Similarly, we note that the proof of the first part (termination) can be read from that of the second, meaning we omit it entirely.
\end{proof}

\noindent The full version of Theorem \ref{res-inv-2}, where programs can contain parallel composition, is currently being considered. We note the implication in the other direction would be valid, but defer proof to future work.

\section{Conclusion} \label{sec:conc}
We have presented an approach to reversing a language containing blocks, local variables, procedures and the interleaving parallel composition. We defined annotation, the process of creating a state-saving annotated version capable of assigning identifiers to capture the interleaving order. This was proved to not alter the behaviour of the original program and to populate the auxiliary store. Inversion creates an inverted version that uses this saved information to restore the initial program state, with this being proved to hold. The auxiliary store is also restored, meaning it is garbage free. 

We are also currently developing a simulator capable of implementing this approach, with one application being to aid the proof of our results. The current prototype is capable of simulating the annotated forwards execution, with the simulation of the inverted execution currently being worked on. This will be used to evaluate the performance overhead and costs associated with our approach to reversibility.

Our future work will continue the development of this simulator, as well as modify the approach allowing for causal-consistent reversibility.

\subsubsection*{Acknowledgements}
We are grateful to the referees for their detailed and helpful comments and suggestions. The authors acknowledge partial support of COST Action IC1405 on Reversible Computation - extending horizons of computing. The third author is supported by JSPS KAKENHI grant numbers 17H01722 and 17K19969.

\nocite{*}
\bibliographystyle{eptcs}
%\bibliography{bib}

\begin{thebibliography}{10}
\providecommand{\bibitemdeclare}[2]{}
\providecommand{\surnamestart}{}
\providecommand{\surnameend}{}
\providecommand{\urlprefix}{Available at }
\providecommand{\url}[1]{\texttt{#1}}
\providecommand{\href}[2]{\texttt{#2}}
\providecommand{\urlalt}[2]{\href{#1}{#2}}
\providecommand{\doi}[1]{doi:\urlalt{http://dx.doi.org/#1}{#1}}
\providecommand{\bibinfo}[2]{#2}

\bibitemdeclare{inproceedings}{VD2004}
\bibitem{VD2004}
\bibinfo{author}{V.~\surnamestart Danos\surnameend} \&
  \bibinfo{author}{J.~\surnamestart Krivine\surnameend} (\bibinfo{year}{2004}):
  \emph{\bibinfo{title}{Reversible Communicating Systems}}.
\newblock In: {\sl \bibinfo{booktitle}{{CONCUR} 2004, Proceedings}}, pp.
  \bibinfo{pages}{292--307}, \doi{10.1007/978-3-540-28644-8_19}.

\bibitemdeclare{article}{RJ1990}
\bibitem{RJ1990}
\bibinfo{author}{R.~\surnamestart Fujimoto\surnameend} (\bibinfo{year}{1990}):
  \emph{\bibinfo{title}{{Parallel Discrete Event Simulation}}}.
\newblock {\sl \bibinfo{journal}{{Communications of the ACM}}}
  \bibinfo{volume}{33}(\bibinfo{number}{10}), pp. \bibinfo{pages}{30--53},
  \doi{10.1145/84537.84545}.

\bibitemdeclare{inproceedings}{EG2014}
\bibitem{EG2014}
\bibinfo{author}{E.~\surnamestart Giachino\surnameend},
  \bibinfo{author}{I.~\surnamestart Lanese\surnameend} \& \bibinfo{author}{C.A.
  \surnamestart Mezzina\surnameend} (\bibinfo{year}{2014}):
  \emph{\bibinfo{title}{Causal-Consistent Reversible Debugging}}.
\newblock In: {\sl \bibinfo{booktitle}{Proceedings of {FASE} 2014}}, pp.
  \bibinfo{pages}{370--384}, \doi{10.1007/978-3-642-54804-8_26}.

\bibitemdeclare{inproceedings}{RG2004}
\bibitem{RG2004}
\bibinfo{author}{R.~\surnamestart Gl{\"{u}}ck\surnameend} \&
  \bibinfo{author}{M.~\surnamestart Kawabe\surnameend} (\bibinfo{year}{2004}):
  \emph{\bibinfo{title}{{Derivation of Deterministic Inverse Programs Based on
  LR Parsing}}}.
\newblock In: {\sl \bibinfo{booktitle}{{FLOPS 2004}}}, {\sl
  \bibinfo{series}{LNCS}} \bibinfo{volume}{2998},
  \bibinfo{publisher}{Springer}, pp. \bibinfo{pages}{291--306},
  \doi{10.1007/978-3-540-24754-8_21}.

\bibitemdeclare{article}{RG2005}
\bibitem{RG2005}
\bibinfo{author}{R.~\surnamestart Gl{\"{u}}ck\surnameend} \&
  \bibinfo{author}{M.~\surnamestart Kawabe\surnameend} (\bibinfo{year}{2005}):
  \emph{\bibinfo{title}{{Revisiting an Automatic Program Inverter for LISP}}}.
\newblock {\sl \bibinfo{journal}{{SIGPLAN} Notices}}
  \bibinfo{volume}{40}(\bibinfo{number}{5}), pp. \bibinfo{pages}{8--17},
  \doi{10.1145/1071221.1071222}.

\bibitemdeclare{article}{RG2016}
\bibitem{RG2016}
\bibinfo{author}{R.~\surnamestart Gl{\"{u}}ck\surnameend} \&
  \bibinfo{author}{T.~\surnamestart Yokoyama\surnameend}
  (\bibinfo{year}{2016}): \emph{\bibinfo{title}{A Linear-Time Self-Interpreter
  of a Reversible Imperative Language}}.
\newblock {\sl \bibinfo{journal}{Computer Software}}
  \bibinfo{volume}{33}(\bibinfo{number}{3}), pp. \bibinfo{pages}{3_108--3_128},
  \doi{10.11309/jssst.33.3_108}.

\bibitemdeclare{article}{RG2017}
\bibitem{RG2017}
\bibinfo{author}{R.~\surnamestart Gl{\"{u}}ck\surnameend} \&
  \bibinfo{author}{T.~\surnamestart Yokoyama\surnameend}
  (\bibinfo{year}{2017}): \emph{\bibinfo{title}{A Minimalist's Reversible While
  Language}} \bibinfo{volume}{E100.D}, pp. \bibinfo{pages}{1026--1034}.
\newblock \doi{10.1587/transinf.2016EDP7274}.

\bibitemdeclare{book}{DG1981}
\bibitem{DG1981}
\bibinfo{author}{D.~\surnamestart Gries\surnameend} (\bibinfo{year}{1981}):
  \emph{\bibinfo{title}{The Science of Programming}}.
\newblock \bibinfo{publisher}{Springer}, \doi{10.1007/978-1-4612-5983-1}.

\bibitemdeclare{article}{TH2017}
\bibitem{TH2017}
\bibinfo{author}{T.~\surnamestart Haulund\surnameend} (\bibinfo{year}{2017}):
  \emph{\bibinfo{title}{Design and Implementation of a Reversible
  Object-Oriented Programming Language}}.
\newblock {\sl \bibinfo{journal}{CoRR}} \bibinfo{volume}{abs/1707.07845}.
\newblock \urlprefix\url{http://arxiv.org/abs/1707.07845}.

\bibitemdeclare{inproceedings}{JH2017}
\bibitem{JH2017}
\bibinfo{author}{J.~\surnamestart Hoey\surnameend},
  \bibinfo{author}{I.~\surnamestart Ulidowski\surnameend} \&
  \bibinfo{author}{S.~\surnamestart Yuen\surnameend} (\bibinfo{year}{2017}):
  \emph{\bibinfo{title}{Reversing Imperative Parallel Programs}}.
\newblock In: {\sl \bibinfo{booktitle}{{\rm Proceedings of} Express/SOS, {\rm
  2017}}}, {\sl \bibinfo{series}{EPTCS}} \bibinfo{volume}{255}, pp.
  \bibinfo{pages}{51--66}, \doi{10.4204/EPTCS.255.4}.

\bibitemdeclare{book}{HH2010}
\bibitem{HH2010}
\bibinfo{author}{H.~\surnamestart H{\"{u}}ttel\surnameend}
  (\bibinfo{year}{2010}): \emph{\bibinfo{title}{{Transitions and Trees - An
  Introduction to Structural Operational Semantics}}}.
\newblock \bibinfo{publisher}{Cambridge University Press},
  \doi{10.1017/CBO9780511840449}.

\bibitemdeclare{article}{DJ1985}
\bibitem{DJ1985}
\bibinfo{author}{D.R. \surnamestart Jefferson\surnameend}
  (\bibinfo{year}{1985}): \emph{\bibinfo{title}{Virtual Time}}.
\newblock {\sl \bibinfo{journal}{ACM Transactions on Programming Languages and Systems}}
  \bibinfo{volume}{7}(\bibinfo{number}{3}), pp. \bibinfo{pages}{404--425},
  \doi{10.1145/3916.3988}.

\bibitemdeclare{article}{IL2014}
\bibitem{IL2014}
\bibinfo{author}{I.~\surnamestart Lanese\surnameend}, \bibinfo{author}{C.A.
  \surnamestart Mezzina\surnameend} \& \bibinfo{author}{F.~\surnamestart
  Tiezzi\surnameend} (\bibinfo{year}{2014}):
  \emph{\bibinfo{title}{Causal-Consistent Reversibility}}.
\newblock {\sl \bibinfo{journal}{Bulletin of the {EATCS}}}
  \bibinfo{volume}{114}.

\bibitemdeclare{book}{KP2014}
\bibitem{KP2014}
\bibinfo{author}{K.~\surnamestart Perumalla\surnameend} (\bibinfo{year}{2014}):
  \emph{\bibinfo{title}{{Introduction to Reversible Computing}}}.
\newblock \bibinfo{publisher}{CRC Press}, \doi{10.1201/b15719}.

\bibitemdeclare{article}{IP2007}
\bibitem{IP2007}
\bibinfo{author}{I.C.C. \surnamestart Phillips\surnameend} \&
  \bibinfo{author}{I.~\surnamestart Ulidowski\surnameend}
  (\bibinfo{year}{2007}): \emph{\bibinfo{title}{{Reversing Algebraic Process
  Calculi}}}.
\newblock {\sl \bibinfo{journal}{J. Log. Algebr. Program.}}
  \bibinfo{volume}{73}(\bibinfo{number}{1-2}), pp. \bibinfo{pages}{70--96},
  \doi{10.1016/j.jlap.2006.11.002}.

\bibitemdeclare{inproceedings}{IP2012}
\bibitem{IP2012}
\bibinfo{author}{I.C.C. \surnamestart Phillips\surnameend},
  \bibinfo{author}{I.~\surnamestart Ulidowski\surnameend} \&
  \bibinfo{author}{S.~\surnamestart Yuen\surnameend} (\bibinfo{year}{2012}):
  \emph{\bibinfo{title}{{A Reversible Process Calculus and the Modelling of the
  ERK Signalling Pathway}}}.
\newblock In: {\sl \bibinfo{booktitle}{RC2012}}, {\sl \bibinfo{series}{LNCS}}
  \bibinfo{volume}{7581}, \bibinfo{publisher}{Springer}, pp.
  \bibinfo{pages}{218--232}, \doi{10.1007/978-3-642-36315-3_18}.

\bibitemdeclare{inproceedings}{MS2015}
\bibitem{MS2015}
\bibinfo{author}{M.~\surnamestart Schordan\surnameend}, \bibinfo{author}{D.R.
  \surnamestart Jefferson\surnameend}, \bibinfo{author}{P.D.~Barnes
  \surnamestart Jr.\surnameend}, \bibinfo{author}{T.~\surnamestart
  Oppelstrup\surnameend} \& \bibinfo{author}{D.J. \surnamestart
  Quinlan\surnameend} (\bibinfo{year}{2015}): \emph{\bibinfo{title}{{Reverse
  Code Generation for Parallel Discrete Event Simulation}}}.
\newblock In: {\sl \bibinfo{booktitle}{{RC} 2015}}, {\sl
  \bibinfo{series}{LNCS}} \bibinfo{volume}{9138},
  \bibinfo{publisher}{Springer}, pp. \bibinfo{pages}{95--110},
  \doi{10.1007/978-3-319-20860-2_6}.

\bibitemdeclare{inproceedings}{US2016}
\bibitem{US2016}
\bibinfo{author}{U.P. \surnamestart Schultz\surnameend} \&
  \bibinfo{author}{H.B. \surnamestart Axelsen\surnameend}
  (\bibinfo{year}{2016}): \emph{\bibinfo{title}{Elements of a Reversible
  Object-Oriented Language - Work-in-Progress Report}}.
\newblock In: {\sl \bibinfo{booktitle}{Reversible Computation - 8th
  International Conference, {RC} 2016, 
  Proceedings}}, pp. \bibinfo{pages}{153--159},
  \doi{10.1007/978-3-319-40578-0_10}.

\bibitemdeclare{inproceedings}{GV2011}
\bibitem{GV2011}
\bibinfo{author}{G.~\surnamestart Vulov\surnameend},
  \bibinfo{author}{C.~\surnamestart Hou\surnameend},
  \bibinfo{author}{R.W.~\surnamestart Vuduc\surnameend},
  \bibinfo{author}{R.~\surnamestart Fujimoto\surnameend},
  \bibinfo{author}{D.J.~\surnamestart Quinlan\surnameend} \&
  \bibinfo{author}{D.R.~\surnamestart Jefferson\surnameend}
  (\bibinfo{year}{2011}): \emph{\bibinfo{title}{{The Backstroke Framework for
  Source Level Reverse Computation Applied to Parallel Discrete Event
  Simulation}}}.
\newblock In: {\sl \bibinfo{booktitle}{WSC 2011}}, \bibinfo{publisher}{{WSC}},
  \doi{10.1109/WSC.2011.6147998}.

\bibitemdeclare{inproceedings}{TY2008A}
\bibitem{TY2008A}
\bibinfo{author}{T.~\surnamestart Yokoyama\surnameend}, \bibinfo{author}{H.B.
  \surnamestart Axelsen\surnameend} \& \bibinfo{author}{R.~\surnamestart
  Gl{\"{u}}ck\surnameend} (\bibinfo{year}{2008}):
  \emph{\bibinfo{title}{{Principles of a Reversible Programming Language}}}.
\newblock In: {\sl \bibinfo{booktitle}{Proceedings of the 5th Conference
  on Computing Frontiers}}, \bibinfo{publisher}{{ACM}}, pp.
  \bibinfo{pages}{43--54}, \doi{10.1145/1366230.1366239}.

\bibitemdeclare{inproceedings}{TY2007}
\bibitem{TY2007}
\bibinfo{author}{T.~\surnamestart Yokoyama\surnameend} \&
  \bibinfo{author}{R.~\surnamestart Gl{\"{u}}ck\surnameend}
  (\bibinfo{year}{2007}): \emph{\bibinfo{title}{{A Reversible Programming
  Language and its Invertible Self-interpreter}}}.
\newblock In: {\sl \bibinfo{booktitle}{Proceedings of the 2007 {ACM} {SIGPLAN}
  Workshop on Partial Evaluation and Semantics-based Program Manipulation}},
  \bibinfo{publisher}{{ACM}}, pp. \bibinfo{pages}{144--153},
  \doi{10.1145/1244381.1244404}.

\end{thebibliography}

\end{document}